\def\RR{\mathbbm{R}}
\def\CC{\mathbbm{C}}
\def\MM{\mathcal{M}}
\def\1{\mathbf{1}}
\def\0{\mathbf{0}}
\def\o{\mathbb{O}}
\def\minimize{\textrm{minimize}}
\def\st{\textrm{subject to }}
\def\maximize{\textrm{maximize}}
\def\Id{\mathbbm{1}}
\def\b{\mathbf{b}}
\def\c{\mathbf{c}}
\def\d{\mathbf{d}}
\def\e{\mathbf{e}}
\def\t{\mathbf{t}}
\def\u{\mathbf{u}}
\def\p{\mathbf{p}}
\def\q{\mathbf{q}}
\def\v{\mathbf{v}}
\def\x{\mathbf{x}}
\def\y{\mathbf{y}}
\def\z{\mathbf{z}}
\def\bxi{\bm{\xi}}
\def\bzeta{\bm{\zeta}}
\def\CC{\mathcal{C}}
\def\CD{\mathcal{D}}
\def\CP{\mathcal{P}}
\newcommand{\ket}[1]{| #1 \rangle}
\newcommand{\bra}[1]{\langle #1 |}
\newtheorem{prop}{Proposition}
\newtheorem{corollary}[prop]{Corollary}
\newtheorem{theorem}[prop]{Theorem}
\newtheorem{lemma}[prop]{Lemma}
\newtheorem{example}[prop]{Example}
\newtheorem{proposition}[prop]{Proposition}
\newcommand{\beq}{\begin{equation}}
\newcommand{\eeq}{\end{equation}}
\newcommand{\bea}[1]{\begin{equation}\begin{array}{#1}}
\newcommand{\eea}{\end{array}\end{equation}}
\newcommand{\beqn}{\begin{eqnarray}}
\newcommand{\eeqn}{\end{eqnarray}}
\renewcommand{\rho}{\varrho}
\newcommand{\processnext}[1]{%
  \ifx\listfinish#1\empty\else\listact{#1}\expandafter\processnext\fi}
\newcommand{\figref}[1]{Fig.~\ref{#1}}
\newcommand{\ea}{\end{eqnarray}}
\newcommand{\ban}{\begin{eqnarray*}}
\newcommand{\ean}{\end{eqnarray*}}
\begin{document}
\title{A unifying framework for relaxations of the causal assumptions in Bell's theorem}
\date{\today}

\author{R.\ Chaves}
\affiliation{Institute for Physics, University of Freiburg, Rheinstrasse 10, D-79104 Freiburg, Germany}
\author{R.\ Kueng}
\affiliation{Institute for Physics, University of Freiburg, Rheinstrasse 10, D-79104 Freiburg, Germany}
\author{J.B.\ Brask}
\affiliation{D\'epartement de Physique Th\'eorique, Universit\'e de Gen\`eve, 1211 Gen\`eve, Switzerland}
\author{D.\ Gross}
\affiliation{Institute for Physics, University of Freiburg, Rheinstrasse 10, D-79104 Freiburg, Germany}

\begin{abstract}
Bell's Theorem shows that quantum mechanical correlations can violate
the constraints that the causal structure of certain experiments
impose on any classical explanation. It is thus natural to
ask to which degree the causal assumptions -- e.g.\ ``locality''
or ``measurement independence'' -- have to be relaxed in order to
allow for a classical description of such experiments.
Here, we develop a conceptual and computational framework for treating
this problem.
We employ the language of Bayesian networks to systematically construct
alternative causal structures and bound the degree of relaxation
using quantitative measures that originate from the mathematical
theory of causality. The main technical insight is that the resulting
problems can often be expressed as computationally tractable linear
programs.
We demonstrate
the versatility of the framework by applying it to a variety
of scenarios, ranging from relaxations of the measurement
independence, locality and bilocality assumptions, to a novel
causal interpretation of CHSH inequality violations.
\end{abstract}

\maketitle

The paradigmatic Bell experiment \cite{Bell1964} involves two distant
observers, each with the capability to perform one of two possible
experiments on their shares of a joint system. Bell observed that even
absent of any detailed information about the physical processes
involved, the \emph{causal structure} of the setup alone implies
strong constraints on the correlations that can arise from any
\emph{classical} description \footnote{Here enters the third
assumption in Bell's theorem, that of \emph{realism}. It states that
one can consistently  assign a value to any physical property --
independently of whether or not it is measured.
In the Bayesian network language, this is expressed by the fact that
variables are assumed to be deterministic functions of its parents.}.
The physically well-motivated causal assumptions are:
(i) \emph{measurement independence}:
experimenters can choose which
property of a system to measure, independently of how the system has
been prepared;
(ii) \emph{locality}: the results obtained by one observer cannot be
influenced by any action of the other (ideally space-like separated)
experimenter.
The resulting constraints are Bell's inequalities \cite{Bell1964}.
Quantum mechanical processes subject to the same causal structure can
violate these constraints -- a prediction that has been abundantly
verified experimentally
\cite{Freedman1972,Aspect1982,Christensen2013,Rowe2001,Giustina2013}.
This effect is commonly
referred to as
\emph{quantum non-locality}.

It is now natural to ask how stable the effect of quantum non-locality
is with respect to relaxations of the causal assumptions. Which
``degree of measurement dependence'', e.g., is required to reconcile
empirically observed correlations with a classical and local model?
Such questions are not only, we feel, of great relevance to
foundational questions -- they are also of interest to practical
applications of non-locality, e.g.\ in cryptographic protocols.
Indeed, eavesdroppers can (and do \cite{lydersen2010hacking}) exploit the failure of
a given cryptographic device to be constrained by the presumed causal
structure
to compromise its security. At the same time, it will often be
difficult to ascertain that causal assumptions hold \emph{exactly} --
which makes it important to develop a systematic quantitative theory.

Several variants of this question have recently attracted considerable
attention
\cite{Toner2003,Barrett2010,Hall2010a,Hall2010b,Hall2011,Koh2012,Manik2013,Rai2012,Biswajit2013,Thinh2013,Putz2014,Maxwell2014}.
For example, measurement dependence has been found to be a very strong
resource:
If no
restrictions are imposed on possible correlations between the
measurement choices and the source producing the particles to be
measured, any nonlocal distribution can be reproduced
\cite{Brans1988}. What is more,
only about
 about $1/15$ of a bit of
correlation between the source and measurements is sufficient
to reproduce all correlations obtained by projective measurements on a
singlet state
\cite{Barrett2010,Hall2010b,Hall2011}.
In turn, considering relaxations of the locality assumption, Toner and
Bacon showed that one bit of communication between the distant parties is again sufficient to simulate the correlations of
singlet states \cite{Toner2003}.

In this paper we provide a unifying framework for treating relaxations
of the measurement independence and locality assumptions in Bell's
theorem. To achieve this, we borrow several concepts from the
mathematical theory of \emph{causality}, a relatively young subfield
of probability theory and statistics \cite{Pearlbook,Spirtesbook}.
With the aim of describing the causal relations (rather than mere
correlations) between variables that can be extracted from empirical
observations, this community has developed a systematic and rigorous
theory of causal structures and quantitative measures of causal
influence.

Our framework rests on three observations (details are provided
below):
(i)
Alternative causal structures can systematically be represented
using the graphical notation of Bayesian networks \cite{Pearlbook}.
There, variables are associated with nodes in a graph, and directed
edges represent functional dependencies.
(ii)
These edges can be weighted by quantitative measures of causal
influence \cite{Pearlbook,Janzing2013}.
(iii)
Determining the minimum degree of influence required for a classical
explanation of observable distributions can frequently be cast as a
computationally tractable linear program.

The versatility of the framework is
demonstrated in a variety of applications. We give an operational
meaning to the violation of the CHSH inequality \cite{Clauser1969} as
the minimum amount of direct causal influence between the parties
required to reproduce the observed correlations. Considering the
Collins-Gisin scenario \cite{Collins2004}, we show that quantum
correlations are incompatible with a classical description, even if we
allow one of the parties to communicate its outcomes. We also show
that the results in \cite{Barrett2010,Hall2011} regarding
measurement-independence relaxations can be improved by considering
different Bell scenarios. Finally, we study the bilocality assumption
\cite{Branciard2010,Branciard2012} and show that although it defines a non-convex
set, its relaxation can also be cast as a linear program, naturally
quantifying the degree of non-bilocality.

\textit{Bayesian networks and measures for the relaxation of causal assumptions---}
The causal relationships between $n$ jointly distributed discrete
random variables $(X_1, \dots, X_n)$ are specified
by means of a
\emph{directed acyclic graph} (DAG).
To this end,
each variable is associated with one of the nodes of the graph.
One then says that the $X_i$'s form a \emph{Bayesian
network} with respect to the graph, if every variable can be expressed
as a deterministic function $X_i=f_i(\mathrm{PA}_i,N_i)$ of its graph-theoretic
parents $\mathrm{PA}_i$ and
an unobserved noise term $N_i$, such that the $N_i$'s are jointly
independent \footnote{We adopt the convention that
uppercase letters label random variables while
their values are denoted in lower case. For brevity, we will
sometimes suppress explicit mention of the random variables -- e.g.\
write $p(x_i, x_j)$ instead of the more precise
$p(X_i=x_i,X_j=x_j)$.}.
This is the case if and only
if the probability $p({\bf x})=p(x_1, \dots, x_n)$ is of the form
\begin{equation}
	p({\bf x}) = \prod_{i=1}^n p (x_i | \mathrm{pa}_{i} ).
\label{markov_decom}
\end{equation}
This identity encodes the causal relationships implied by the DAG \cite{Pearlbook}.

As a paradigmatic example of a DAG, consider a bipartite Bell scenario
(\figref{fig:models}a). In this scenario, two separated observers,
Alice and Bob, each perform measurements according to some inputs,
here represented by random variables $X$ and $Y$ respectively, and
obtain outcomes, represented by $A$ and $B$.
The causal model involves an explicit shared hidden variable
$\Lambda$
which mediates the correlations between $A$ and $B$.
From \eqref{markov_decom}
it follows that $p(x,y,\lambda)=p(x)p(y)p(\lambda)$
--- which reflects the measurement independence assumption.
It also follows that
$a=f_A(x,\lambda,n_A)$, $b=f_B(y,\lambda,n_B)$. We incur no loss of
generality by absorbing the local noise terms $N_A, N_B$ into
$\Lambda$ and will thus assume from now on that $a=f_A(x,\lambda),
b=f_B(y,\lambda)$
for suitable
functions $f_A, f_B$. This encodes the locality assumption.
Together, these relations
imply the well-known local hidden variable (LHV) model of Bell's theorem:
\begin{equation}
\label{LHV}
p(a,b \vert x,y) = \sum_{\lambda} p(a\vert x, \lambda) p(b\vert y, \lambda) p(\lambda).
\end{equation}

Causal mechanisms relaxing locality
(\figref{fig:models}b--d) and measurement independence
(\figref{fig:models}e) can be easily expressed using Bayesian
networks. The networks themselves, however, do not directly quantify
the degree of relaxation. Thus, one needs to devise ways of checking
and quantifying such causal dependencies. To define a sensible measure
of causal influence we introduce a core concept from the causality
literature --  \emph{interventions} \cite{Pearlbook}.

\begin{figure}[!t]
\includegraphics[width=0.98\columnwidth]{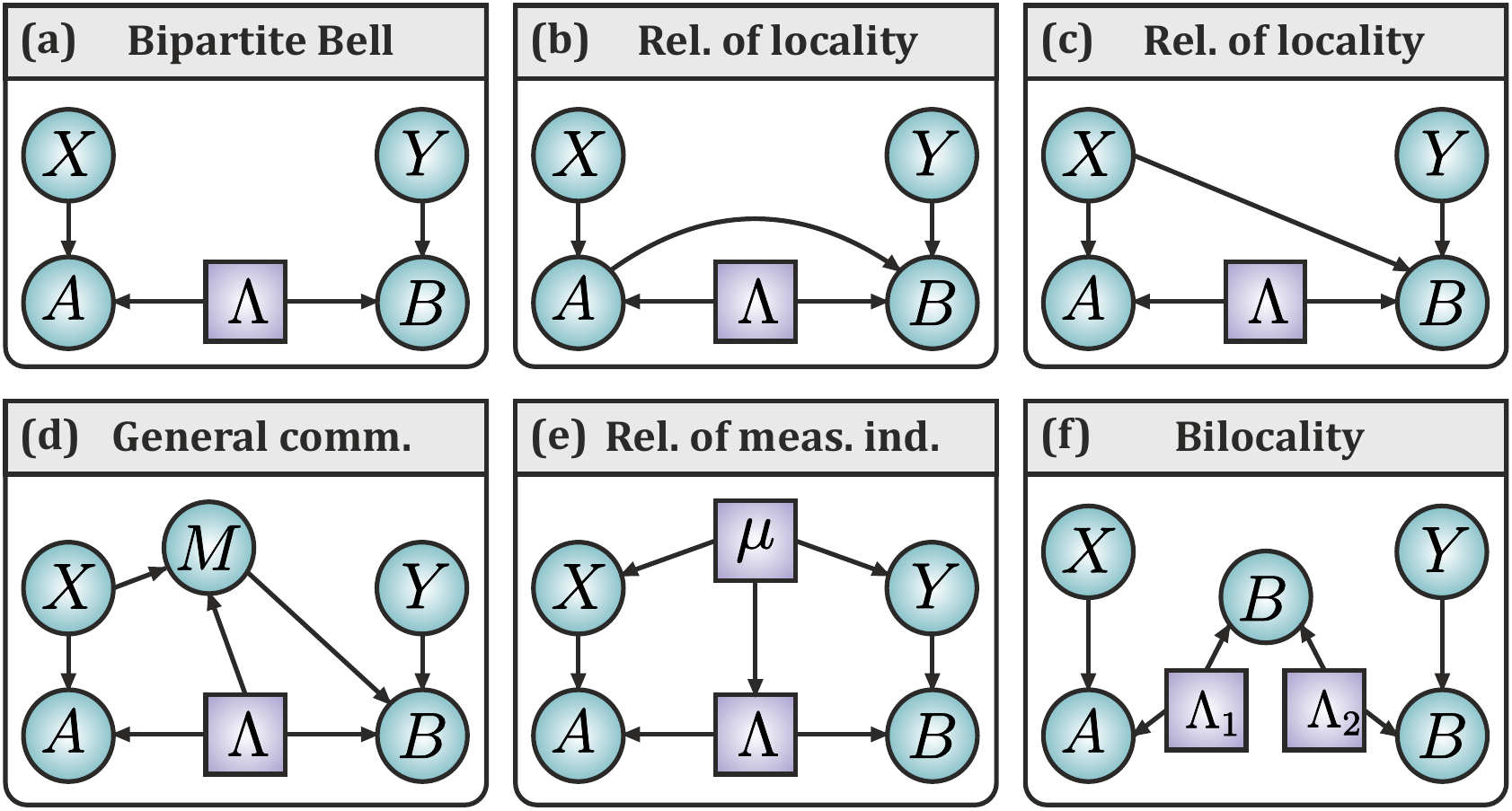}
\caption{\textbf{(a)} LHV model for the bipartite Bell scenario. \textbf{(b)} A relaxation of locality, where $A$ may have direct causal influence on $B$. \textbf{(c)} Another relaxation in which $X$ may have direct causal influence on $B$. \textbf{(d)} The most general communication scenario from Alice to Bob. \textbf{(e)} A relaxation of measurement independence, where the two inputs may be correlated, via a common ancestor, with the hidden variable $\Lambda$. \textbf{(f)} The bilocality scenario for which the two sources $\Lambda_1$ and $\Lambda_2$ are assumed to be independent. Round edges stand for observable variables while squares represent non-observable (hidden) ones.}
\label{fig:models}
\end{figure}

An intervention is the act of forcing a variable, say $X_i$, to
take on some given value $x^{\prime}_i$ and is denoted by
$do(x^{\prime}_i)$. The effect is to erase the original mechanism
$f_i(pa_i,n_i)$ and place $X_i$ under the influence of a new mechanism
that sets it to the value $x^{\prime}_i$ while keeping all other
functions $f_j$ for $j \neq i$ unperturbed. The intervention
$do(x^{\prime}_i)$ amounts to a change in the decomposition
\eqref{markov_decom}, given by \footnote{
{We note that the $do$-operation is defined only relative to a causal
model as encoded in the DAG. (The graph structure enters
(\ref{markov_trunc}) through the reference to parent nodes
$\mathrm{pa}_j$).
In particular,
$p(y \vert do(x))$ is in general different
from the usual conditional probability $p(y \vert x)$ -- these notions only
coincide if the set of parents $PA_X$ and $PA_Y$ are disjoint. For
example:  The
variables $X$ and $Y$ can be maximally correlated, i.e.\ $p(y \vert
x)\propto\delta_{x,y}$, and still $p(y \vert do(x))=p(y)$. This would occur
e.g.\ if all the correlations between the variables are mediated
via a common parent $u$, such that $p(x,y\vert u)=p(x\vert u)p(x\vert
u)$.}}
\begin{equation}
\label{markov_trunc}
	p({\bf x} \vert do(x^{\prime}_i))   =\left\{
\begin{array}{ll}
\prod_{j \neq i}^n p (x_j | \mathrm{pa}_{j} ) & \text{ if } x_i=x^{\prime}_i,\\
0 & \text{ otherwise.}%
\end{array}
\right.
\end{equation}
Considering locality relaxations, we can now define a measure $\mathcal{C}_{A \rightarrow B}$ for the \textit{direct causal influence} of $A$ into $B$ for the model in \figref{fig:models}b:
\begin{equation}
\label{meas_causal}
\mathcal{C}_{A \rightarrow B}= \sup_{b,y,a,a^{\prime}} \sum_{\lambda} p(\lambda) \vert p(b\vert do(a), y, \lambda)-p(b\vert do(a^{\prime}), y,\lambda )\vert.
\end{equation}
It is the maximum shift (averaged over the unobservable $\Lambda$) in the probability of $B$ caused by interventions in $A$.
Similarly, one can define $\mathcal{C}_{X \rightarrow B}$ for the DAG
in \figref{fig:models}c and in other situations. To highlight the
relevance of this measure,
we note that a variation of it, known as \emph{average causal effect},
can be used to quantify the effect of a drug in remedying a given
symptom \cite{Pearlbook}. We are also interested in relaxations of
measurement independence. Considering the case of a bipartite scenario (illustrated in \figref{fig:models}e and that can be easily extended to multipartite versions), we can define the measure
\begin{eqnarray}
\label{meas_corr}
\mathcal{M}_{X,Y:\lambda}  = \sum_{x,y,\lambda} \vert p(x, y,\lambda)-p(x,y)p(\lambda) \vert.
\end{eqnarray}
This can be understood as a measure of how much the inputs are correlated with the source, i.e.~how much the underlying causal model fails to comply with measurement independence.

\textit{The linear programing framework---}Given some observed
probabilities and a particular measure of relaxation, our aim is to
compute the minimum value of the measure compatible with the
observations. As sketched below, this leads to a tractable
linear program as long as there is only one unobserved variable
$\Lambda$. (However, even in case of several hidden variables, variants of these
ideas can still be used).
Details are given in the Appendix.

For simplicity we consider the usual Bell scenario of
\figref{fig:models}a.
The most general observable quantity is the joint distribution
$p(a,b,x,y)=p(a,b|x,y)p(x)p(y)$. Since we control the ``inputs'' $X$
and $Y$, their distribution carries no information and we may thus
restrict attention to $p(a,b|x,y)$.
This conditional probability is, in turn, a linear function of the
distribution of $\Lambda$.
To make this explicit, represent $p(a,b|x,y)$ as a vector $\p$ with
components $\p_j$ labeld by the multi-index $j=(a,b,x,y)$. Similarly,
identify the distribution of $\Lambda$ with a vector with components
$\q_\lambda=p(\Lambda = \lambda)$. Then from the discussion above, we
have that $\p = T \q$ where $T$ is a matrix with elements
$T_{j,\lambda} = \delta_{a,f_A(x,\lambda)}
\delta_{b,f_B(y,\lambda)}$.
Conditional expectations that include the
application  of a $do$-operation are obtained via a modified $T$ matrix.
E.g., $\q'_j=p(a,b|x,y,do(a'))=T'q$ for
$T'_{j,\lambda} = \delta_{a,a'} \delta_{b,f_B(y,\lambda)}$.
The measures $\mathcal{C}$ and $\mathcal{M}$ are easily seen to be
convex functions of the conditional probabilities $p(a,b|x,y)$ and
their variants arising from the application of $do$'s
-- and thus
convex functions of $\q$. Hence their minimization subject to the
linear constraint $T\q=\p$ for an empirically observed distribution
$\p$ is a convex optimization problem.
This remains true if only some linear function $V\p=V T\q$ (e.g.\ a
Bell inequality) of the distribution $\p$ is constrained.
The problem is not manifestly a
(computationally tractable) linear program (LP), since  neither
objective function is linear in $\q$. However, we establish in the
appendix that it can be cast as such:

\begin{theorem}\label{thm:min_max_duality}
The constrained minimization of the measures $\mathcal{C}$ and
$\mathcal{M}$ over hidden variables reproducing any observed
probability distribution can be reformulated as a primal linear
program (LP). Its solution is equivalent to
\begin{equation}
\max_{1 \leq i \leq K} \langle \v_i, V \p \rangle
, \label{eq:closed_form}
\end{equation}
where the $\left\{ \v_i \right\}_{i=1}^K$ are the vertices of the LP's
dual feasible region.
\end{theorem}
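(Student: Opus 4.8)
The plan is to prove the statement in two stages: first exhibit each constrained minimization as a genuine linear program, and then extract the closed form \eqref{eq:closed_form} from LP duality together with the fundamental theorem of linear programming.

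For the first stage I would linearize the piecewise-linear convex objectives by the standard epigraph construction. Since the response functions are enumerated by $\lambda$, for $\mathcal{C}$ each quantity $|p(b|do(a),y,\lambda)-p(b|do(a'),y,\lambda)|$ is a fixed constant, so for every fixed tuple $(b,y,a,a')$ the inner sum $\sum_\lambda q_\lambda\,|\cdots|$ is \emph{linear} in $\q$ and the supremum is a maximum of finitely many linear functions; for $\mathcal{M}$, treating the controlled input distribution $p(x,y)$ as fixed renders each $p(x,y,\lambda)-p(x,y)p(\lambda)$ an affine function $\ell_k(\q)$, so the objective is a sum of moduli of affine functions. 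In either case I introduce one epigraph variable $s$ with constraints $s\ge(\text{each linear piece})$ and minimize $s$ (for $\mathcal{M}$ I first replace each $|\ell_k(\q)|$ by a slack $w_k$ subject to $w_k\ge \ell_k(\q)$ and $w_k\ge-\ell_k(\q)$, minimizing $\sum_k w_k$). Adjoining the linear normalization $\sum_\lambda q_\lambda=1$, the positivity $q_\lambda\ge 0$, and the data constraint $VT\q=V\p$ produces a primal LP.

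The decisive structural point for the second stage is that the observation $\p$ enters this primal LP \emph{only} through the right-hand side $V\p$ of the data constraint, while the cost vector and the constraint matrix are $\p$-independent. Dualizing therefore yields a \emph{maximization} whose feasible region is a fixed polyhedron $P$ -- independent of $\p$ -- and whose objective is the pairing of the dual variable with the primal right-hand side, i.e.\ of the form $\langle\v,V\p\rangle$. Strong duality for linear programs applies because the primal is feasible (a sufficiently relaxed causal model reproduces any $\p$) and bounded below (the measures are nonnegative), so the sought minimum equals $\max_{\v\in P}\langle\v,V\p\rangle$. Finally, a linear functional over a polyhedron attains its maximum at a vertex, so this equals $\max_{1\le i\le K}\langle\v_i,V\p\rangle$ with $\v_1,\dots,\v_K$ the vertices of $P$, which is \eqref{eq:closed_form}.

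I expect the main obstacle to be the bookkeeping that yields precisely this clean form with no residual additive constant. One must check that $\p$ really appears only as a primal right-hand side -- so that $P$ is genuinely $\p$-independent -- and that the normalization constraint and, for $\mathcal{M}$, the constant $p(x,y)p(\lambda)$ do not leave a stray constant in the dual objective. This is resolved by noting that the normalization of $\q$ is implied by the data constraint together with the column sums of $T$ (so it needs no separate dual contribution), and that fixing $p(x,y)$ keeps the $\mathcal{M}$-terms affine; the few remaining constants are absorbed using that $\p$ is itself normalized.
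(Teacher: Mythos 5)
Your proposal follows essentially the same route as the paper: recast $\mathcal{C}$ as an $\ell_\infty$-norm and $\mathcal{M}$ as an $\ell_1$-norm of a linear image of $\q$, linearize with epigraph/slack variables to obtain a primal LP, then dualize, note that $\p$ enters only the right-hand side so the dual feasible region is a fixed polyhedron, and read off $\max_{1\le i\le K}\langle \v_i, V\p\rangle$ from strong duality and the fundamental theorem of LP (your handling of the normalization constraint and of the fixed $p(x,y)$ likewise matches the paper's Corollary~\ref{cor:3}). The one point you state too quickly --- ``a linear functional over a polyhedron attains its maximum at a vertex'' is false for general unbounded polyhedra --- is exactly what the paper's Proposition~\ref{prop:boundedness} is devoted to: one must use that the dual feasible set lies in the nonnegative orthant (hence is pointed) and that the optimum is finite, so that recession rays contribute nonpositively and the maximum is indeed attained at a vertex.
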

This result highlights another nice aspect of our framework.
Unlike the results in
\cite{Hall2010a,Hall2010b,Hall2011,Manik2013,Rai2012,Koh2012,Biswajit2013},
\eqref{eq:closed_form} is a closed form-expression valid for any
distribution (or observation derived from it by a linear function $V
\p$), not just the value of a specific Bell inequality. This allows
for a much more detailed description.

In the following sections, we apply our framework to a variety of applications. We focus on the results while the more technical proofs are given in the Appendices.

\textit{Novel interpretation of the CHSH inequality---}
As a first application, we show that a violation of the CHSH inequality can be interpreted as the minimal direct causal influence between the parties required to simulate the observed correlations.

Intuitively, the more nonlocal a given distribution is, the more direct causal influence between Alice and Bob should be required to simulate it. We make this intuition precise by considering the models in \figref{fig:models}b--c and the CHSH scenario (two inputs, two outputs for both Alice and Bob). For any observed distribution $p(a,b\vert x,y)$, we establish in the Appendix that
\begin{equation}
\label{CXB_CHSH}
\min \mathcal{C}_{A \rightarrow B}=\min \mathcal{C}_{X \rightarrow B}= \max \left[ 0,\mathrm{CHSH} \right] ,
\end{equation}
where the maximum should be taken over all the eight symmetries under relabelling of inputs, outputs, and parties of the CHSH quantity \cite{Clauser1969}
\begin{equation}
\label{CHSH}
\begin{split}
CHSH = \, & p(00|00) + p(00|01) + p(00|10) \\ & - p(00|11) - p^{A}(0|0) - p^{B}(0|0) ,
\end{split}
\end{equation}
where the last two terms represent the marginals for Alice and Bob respectively. The CHSH inequality stipulates that for any LHV model, $CHSH \leq 0$. Eq.~\eqref{CXB_CHSH} shows that, regardless of the particular distribution, the minimum direct causal influence is exactly quantified by the CHSH inequality violation.

Inspired by the communication scenario of Toner and Bacon \cite{Toner2003} (\figref{fig:models}d), we can also quantify the relaxation of the locality assumption as the minimum amount of communication required to simulate a given distribution. We measure the communication by the Shannon entropy $H(m)$ of the message $m$ which is sent. For a binary message, we can use our framework to prove, in complete analogy with \eqref{CXB_CHSH}, that
\begin{equation}
\min H(m)= h(\mathrm{CHSH})
\end{equation}
if $\mathrm{CHSH} > 0$ and $0$ otherwise, where $h(v)=-v\log_2v-(1-v)\log_2(1-v)$ is the binary entropy. We note that for maximal quantum violation $\mathrm{CHSH}=1/\sqrt{2} -1/2$, as produced by a single state, a message with $H(m)\approx 0.736$ bits is required. This is less than the $1$ bit of communication required by the protocol of Toner and Bacon \cite{Toner2003} for reproducing arbitrary correlations of a singlet.

\textit{Quantum nonlocality is incompatible with some locality relaxations---}
Given that violation of CHSH can be directly related to relaxation of locality, one can ask whether similar interpretations exists for other scenarios. For example, we can consider a setting with three inputs and two outputs for Alice and Bob, and consider the causal model in \figref{fig:models}b. Similar to the usual LHV model \eqref{LHV}, the correlations compatible with this model form a polytope. One facet of this polytope is
\begin{equation}
\langle E_{00} \rangle  - \langle E_{02} \rangle
- \langle E_{11} \rangle + \langle E_{12} \rangle
-\langle E_{20} \rangle + \langle E_{21} \rangle  \leq 4,
\label{eq:I3322E}
\end{equation}
where $E_{xy}=\langle A_xB_y \rangle = \sum_{a,b} (-1)^{a+b} p(a,b\vert x,y)$. This inequality can be violated by any quantum state $\ket{\psi}= \sqrt {\epsilon} \ket{00}+ \sqrt { (1-\epsilon)}\ket{11}$ with $\epsilon \neq 0,1$. Consequently, any pure entangled state -- no matter how close to separable -- generates correlations that cannot be explained even if we allow for a relaxation of the locality assumption, where one of the parties communicates its measurement outcomes to the other.

\textit{How much measurement dependence is required to causally explain nonlocal correlations?---}
The results in Refs. \cite{Barrett2010,Hall2010b,Hall2011} show that measurement dependence is a very strong resource for simulating nonlocal correlations. In fact, a mutual information as small as $I(X,Y:\lambda) \approx 0.0663$ is already sufficient to simulate all correlations obtained by (any number of) projective measurements on a single state \cite{Hall2010b,Hall2011}. Given the fundamental implication and practical relevance of increasing these requirements, we aim to find larger values for $I (X,Y: \lambda)$ by means of our framework. The result of \cite{Hall2010b,Hall2011} leaves us with three options, regarding the quantum states: either non-maximally entangled states of two qubits, two-qudit states, or states with more than two parties.

Regarding non-maximally entangled two-qubit states, we were unable to improve the minimal mutual information.
Regarding qudits, we have considered relaxations in the CGLMP scenario \cite{Collins2002} -- a bipartite scenario, where Alice and Bob each have two inputs and $d$ outcomes. The CGLMP inequality is of the form $I_{d} \leq 2$.
Assuming that a particular $I_d$-value is observed in the setting of \figref{fig:models}e, we numerically obtain the very simple relation
\begin{equation}
\label{eq.minMId}
\min \mathcal{M}=\max \left[ 0, (I_d-2)/4 \right]
\end{equation}
up to $d = 8$. Via the Pinsker inequality \cite{Fedotov2003,Hall2013}, \eqref{eq.minMId} provides a lower bound on the minimum mutual information $I(X,Y:\lambda) \geq  \mathcal{M}^2 \log_2 \mathrm{e}$. This bound implies that for any $I_d \geq 3.214$, the mutual information required exceeds the $0.0663$ obtained in Ref. \cite{Hall2011}. Using the results in Ref. \cite{Chen2006} for the scaling of the optimal quantum violation with $d$, one sees that this requires $d \geq 16$. However, we note that the bounds provided by the Pinsker inequality are usually far from tight, leaving a lot of room for improvement. Moreover -- as detailed in the Appendix -- a corresponding upper bound (obtained via the solution to the minimization of $\mathcal{M}$) is larger than the values obtained in \cite{Hall2010b,Hall2011} as soon as $d \geq 5$. Though this upper bound is not necessarily tight, we highlight the fact that for $d=2$ it gives exactly $I(X,Y:\lambda)=0.0463$, the value analytically obtained in \cite{Hall2010b,Hall2011}.

Regarding multipartite scenarios, we have considered GHZ correlations \cite{GHZ} in a tri-partite scenario where each party has two inputs and two outputs. We numerically obtain $0.090 \leq I(X,Y,Z:\lambda) \leq 0.207$. This implies that increasing the number of parties can considerably increase the measurement dependence requirements for reproducing quantum correlations.

\textit{Bilocality scenario---}
To illustrate how the formalism can also be used in generalized Bell scenarios \cite{Branciard2010,Branciard2012,Fritz2012,Chaves2012}, we briefly explore the entanglement swapping scenario~\cite{Zukowski1993} of \figref{fig:models}f (a more detailed discussion is given in the Appendix). As can be seen from the DAG, the hidden variables in this scenario are independent $ p(\lambda_1,\lambda_2)=p(\lambda_1)p(\lambda_2)$, the so-called \emph{bilocality} assumption \cite{Branciard2010,Branciard2012}.

As in Ref.~\cite{Branciard2010,Branciard2012}, we take the inputs $x,z$ and the outputs $a,c$ to be dichotomic while $b$ takes four values which we decompose in two bits as $b=(b_0,b_1)$. The distribution of hidden variables can be organized in a 64-dimensional vector ${\bf q}$ with components $q_{\alpha_0,\alpha_1,\beta_0,\beta_1,\gamma_0,\gamma_1}$, where $\alpha_x$ specifies the value of $a$ for a given $x$ (and analogously for $\gamma$, $c$ and $z$) and $\beta_{i}$ specifies the value of $b_i$. Thus together the indices label all the deterministic functions for $A$, $B$, $C$ given their parents. As shown in ~\cite{Branciard2010,Branciard2012}, the bilocality assumption is equivalent to demanding $q^{ac}_{\alpha_0,\alpha_1,\gamma_0,\gamma_1}=q^{a}_{\alpha_0,\alpha_1}q^{c}_{\gamma_0,\gamma_1}$, where $q^{ac}_{\alpha_0,\alpha_1,\gamma_0,\gamma_1}=\sum_{\beta_0,\beta_1}q_{\alpha_0,\alpha_1,\beta_0,\beta_1,\gamma_0,\gamma_1}$ is the marginal for AC etc. Similar to \eqref{meas_corr} a natural measure $\mathcal{M}_{\text{BL}}$ of non-bilocality quantifies by how much the underlying hidden variable distribution fails to comply with this constraint:
\begin{equation}
\label{eq:bilocality_measure}
\mathcal{M}_{\text{BL}}= \sum_{\alpha_0,\alpha_1,\gamma_0,\gamma_1} \vert q^{ac}_{\alpha_0,\alpha_1,\gamma_0,\gamma_1} -q^{a}_{\alpha_0,\alpha_1}q^{c}_{\gamma_0,\gamma_1} \vert.
\end{equation}
Clearly $\mathcal{M}_{\text{BL}}=0$, if and only if the bilocality constraint is fulfilled.
However, demanding bilocality imposes a quadratic constraint on the hidden variables. This results in a non-convex set which is extremely difficult characterize \cite{Branciard2010,Branciard2012,Fritz2012,Chaves2012}. Nevertheless, our framework is still useful, as using the marginals for a given observed distribution to constrain the problem further, the minimization of $\mathcal{M}_{\text{BL}}$ can be cast in terms of a linear program with a single free parameter, which is then further minimised over (see Appendix).

As an illustration we consider the non-bilocal distribution found in Refs.~\cite{Branciard2010,Branciard2012}. It can be obtained by projective measurements on a pair of identical two-qubit entangled states $\varrho=v\ket{\Psi^{-}}\bra{\Psi^{-}}+(1-v)\mathbb{I}/4$. This distribution violates the bilocality inequality $\mathcal{B}=\sqrt{|I|}+\sqrt{|J|} \leq 1$ giving a value $\mathcal{B}=\sqrt{2}v$. Using our framework we find $\mathcal{M}_{\text{BL}}= \max(2v^2-1,0)$. Thus, for this specific distribution, $\mathcal{M}_{\text{BL}}=\mathcal{B}^2-1$, so there is a one-to-one correspondence between the violation of the bilocality inequality and the minimum relaxation of the bilocality constraint required to reproduce the correlations. This assigns an operational meaning to $\mathcal{B}$.

\textit{Conclusion---}
In this work we have revisited nonlocality from a causal inference perspective and provided a linear programming framework for relaxing the measurement independence and locality assumptions in Bell's theorem. Using the framework, we have given a novel causal interpretation of violations of the CHSH inequality, and we have shown that quantum correlations are still incompatible with classical causal models even if one allows for the communication of measurement outcomes. This implies that quantum nonlocality is even stronger than previously thought. Considering a variety of scenarios, we also have shown that the results in Refs. \cite{Barrett2010,Hall2010b,Hall2011} regarding the minimal measurement dependence required to simulated nonlocal correlations can be extended. Finally we explained how the relaxation of the bilocality assumption naturally quantifies the degree of non-bilocality in an entanglement swapping experiment.

In addition to these results, we believe the generality of our framework motivates and -- more importantly --  provides a basic tool for future research. For instance, it would be interesting to understand how our framework can be generalized in order to derive useful inequalities in the context of randomness expansion, following the ideas in \cite{Koh2012}. Another natural possibility, inspired by \cite{Gallego2012,Pironio2013}, would be to look for a good measure of genuine multipartite nonlocality, by considering specific underlying signalling models. Finally, it would be interesting to understand how our treatment of the bilocality problem could be generalized and applied to the characterization of the non-convex compatibility regions of more complex quantum networks \cite{Cavalcanti2011quantum,Fritz2012,Chaves2014,Chaves2014b,Chaves2014information}.

\begin{acknowledgements}
We thank R. Luce and D. Cavalcanti for useful discussions. Research in
Freiburg is supported by the Excellence Initiative of the German
Federal and State Governments (Grant ZUK 43), the Research Innovation
Fund of the University of Freiburg, the US Army Research Office
under contracts W911NF-14-1-0098 and W911NF-14-1-0133 (Quantum
Characterization, Verification, and Validation), and the DFG. JB was
supported by the Swiss National Science Foundation (QSIT director's
reserve) and SEFRI (COST action MP1006).
\end{acknowledgements}

\bibliography{NLcausalbib}

\newpage
\section{Appendix}
For the sake of being as self-contained as possible, we start the appendix with reviewing basic concepts in convex optimization.
We then use these concepts to establish Theorem \ref{thm:min_max_duality} -- our main technical result.
As detailed below, the measures of direct causal influence \eqref{meas_causal} and measurement dependence \eqref{meas_corr}, respectively, can be recast as vector norms.
Their minimization, subject to the specific constraints of each of the causal models in Fig.\;\ref{fig:models} is then explored in detail.

\section{Review of Linear Programming}

\emph{Linear Programming} (LP) is a very powerful and widely used tool
for dealing -- both practically and theoretically --  with certain
families of convex optimization problems.
We refer to
\cite{boyd_convex_2009,barvinok_course_2002} and references therein for
an overview.
From now on we assume that vectors $\x \in \RR^n$ are represented in the standard basis $\left\{ \e_i \right\}_{i=1}^n$, i.e. $\x = \sum_{i=1}^n x_i \e_i$.
In this representation, the two vectors $\0_n := (0,\ldots,0)^T$ (the ``zero''-vector) and $\1_n := (1,\ldots,1)^T$  (the ``all-ones'' vector) will be of particular importance.
Furthermore, we are frequently going to concatenate vectors $\x \in \RR^n$ and $\y \in \RR^m$
via $\x \oplus \y := \sum_{i=1}^n x_i \e_i + \sum_{j=1}^m y_i \e_{n+j} \in \RR^{n+m}$.
Also, $\langle \cdot, \cdot \rangle$ shall denote the standard inner product of finite dimensional real vector spaces.

 There are many equivalent ways of defining the standard form of primal/dual LP's. Here we adopt the formalism of \cite{balke_probabilistic_1997}.
A convex optimization problem fits the framework of linear programming, if it can be reformulated as
\begin{eqnarray}
\gamma =  \min_{\bxi \in \RR^n} & &  \quad  \langle \c, \bxi \rangle  \label{eq:standard_primal}\\
\st & & \quad \Phi \bxi \geq \b \nonumber 	\\
& & \quad \bxi \geq \0_n, \nonumber
\end{eqnarray}
where $\c \in \RR^n$ as well as $\b \in \RR^m$ are vectors and $ \Phi: \RR^n \to \RR^m$ corresponds to an arbitrary real $m \times n$-matrix.
The inequality signs here denote generalized inequalities on $\RR^n$ and $\RR^m$, respectively.
To be concrete, two vectors $\x,\y \in \RR^n$ obey $ \y \geq \x$ if and only if $ y_i \geq x_i$ holds for all $i = 1,\ldots,n$.

It is very useful to consider linear programming problems in pairs.
An optimization of the form (\ref{eq:standard_primal}) is called a \emph{primal problem in standard form}
 and is accompanied by its \emph{dual problem (in standard form)}:
\begin{eqnarray}
\beta = \max_{\bzeta \in \RR^m} & &  \quad  \langle \bzeta, \b \rangle \label{eq:standard_dual} \\
\st & & \quad \Phi^T \bzeta \leq \c \nonumber	\\
& & \quad \bzeta \geq \0_m.	\nonumber
\end{eqnarray}
Here, $\Phi^T: \RR^m \to \RR^n$ denotes the transpose of $\Phi$ (with respect to the standard basis).
For a given pair of linear programs, we call $\bxi \in \RR^n$ \emph{primal feasible} if it obeys the constraints $\Phi \bxi \geq \b$ and $\bxi \geq \0_n$.
Likewise, we call $\bzeta \in \RR^m$ \emph{dual feasible} if $\Phi^T \bzeta \leq \c$ and $\bzeta \geq \0_m$ hold.
Furthermore, we call an LP \emph{primal feasible}, if it admits at least one primal feasible variable $\bxi$
and \emph{dual feasible}, if there exists at least one dual feasible $\bzeta$.
One crucial feature of linear programming problems is the following theorem
(see e.g. \cite[Theorem~IV.6.2 and Theorem~IV.7.2]{barvinok_course_2002})

\begin{theorem}[Weak+Strong Duality] \label{thm:duality}
Any primal feasible $\bxi$ and any dual feasible $\bzeta$ obey
\begin{equation}
\langle \c, \bxi \rangle \geq \langle \bzeta, \b \rangle \quad \textrm{(weak duality)}\label{eq:weak_duality}.
\end{equation}
Furthermore, if a given LP is either primal or dual feasible,  problems (\ref{eq:standard_primal}) and (\ref{eq:standard_dual}) are equivalent, i.e.
\begin{equation}
\gamma = \beta \quad \textrm{(strong duality)}. \label{eq:strong_duality}
\end{equation}
\end{theorem}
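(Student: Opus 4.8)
The plan is to treat the two assertions separately, since weak duality is essentially a one-line computation while strong duality carries all of the content. For weak duality I would start from an arbitrary primal feasible $\bxi$ and dual feasible $\bzeta$ and chain the defining inequalities:
\[
\langle \c, \bxi \rangle \geq \langle \Phi^T \bzeta, \bxi \rangle = \langle \bzeta, \Phi \bxi \rangle \geq \langle \bzeta, \b \rangle .
\]
The first step uses $\c - \Phi^T \bzeta \geq \0_n$ together with $\bxi \geq \0_n$, so that $\langle \c - \Phi^T \bzeta, \bxi \rangle \geq 0$; the middle equality is the adjoint relation defining $\Phi^T$; the last step uses $\Phi \bxi - \b \geq \0_m$ together with $\bzeta \geq \0_m$. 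This establishes \eqref{eq:weak_duality} and, as an immediate consequence, $\beta \leq \gamma$ whenever both problems are feasible.

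For strong duality the key tool is a theorem of the alternative (Farkas' lemma), which I would either invoke from the cited reference \cite{barvinok_course_2002} or derive from the separating hyperplane theorem applied to the closed convex cone $\{\Phi \bxi : \bxi \geq \0_n\}$. In the form I need, it reads: the system $M \bxi \geq \d$, $\bxi \geq \0$ is infeasible if and only if there exists $\w \geq \0$ with $M^T \w \leq \0$ and $\langle \d, \w \rangle > 0$. I would then reduce to the case in which the primal is feasible with finite optimum $\gamma$; the genuinely unbounded or infeasible cases are handled uniformly by weak duality and the standard conventions $\gamma = \beta = \pm\infty$ (the dual-feasible hypothesis being symmetric, as the dual of the dual is the primal).

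With this reduction in place, the core step is to encode optimality as infeasibility. Fixing $\epsilon > 0$, the fact that $\gamma$ is the primal minimum means the augmented system $\Phi \bxi \geq \b$, $-\langle \c, \bxi\rangle \geq -\gamma + \epsilon$, $\bxi \geq \0_n$ has no solution. Applying Farkas to the stacked matrix $M = \left(\begin{smallmatrix}\Phi\\ -\c^T\end{smallmatrix}\right)$ with right-hand side $\d = \b \oplus (-\gamma+\epsilon)$ yields multipliers $\y \geq \0_m$ and $t \geq 0$ obeying $\Phi^T \y - t\c \leq \0_n$ and $\langle \b, \y\rangle + (-\gamma + \epsilon) t > 0$. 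I would then rule out $t = 0$: it would force $\Phi^T \y \leq \0_n$, $\y \geq \0_m$, $\langle \b, \y \rangle > 0$, which by Farkas again certifies that the primal constraints $\Phi \bxi \geq \b$, $\bxi \geq \0_n$ are themselves infeasible, contradicting primal feasibility. Hence $t > 0$, and rescaling $\bzeta := \y/t$ produces a dual feasible point (since $\Phi^T \bzeta \leq \c$ and $\bzeta \geq \0_m$) with $\langle \b, \bzeta \rangle > \gamma - \epsilon$. Letting $\epsilon \to 0$ shows $\beta \geq \gamma$, which together with weak duality gives $\gamma = \beta$, the supremum being attained.

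The main obstacle is clearly the strong-duality direction, and within it the use of Farkas' lemma: everything hinges on converting the statement ``$\gamma$ is optimal'' into the infeasibility of a strict augmented system and then extracting a separating certificate with a strictly positive multiplier $t$ on the objective row. The delicate point is precisely ruling out the degenerate certificate $t = 0$, which is exactly where primal feasibility is consumed; the separating-hyperplane input (equivalently, the citation of Farkas) is what makes the certificate's existence non-constructive but guaranteed. I would not belabor the unbounded or infeasible edge cases, leaving them to weak duality and the $\pm\infty$ conventions for the optimal values.
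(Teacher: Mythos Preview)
Your argument is correct and follows the standard route to LP duality: weak duality by the obvious chain of inequalities, then strong duality via Farkas' lemma applied to the infeasible augmented system $\Phi\bxi\geq\b$, $-\langle\c,\bxi\rangle\geq-\gamma+\epsilon$, with the degenerate certificate $t=0$ excluded by primal feasibility. The handling of the unbounded and infeasible edge cases through the $\pm\infty$ conventions and primal--dual symmetry is also fine for the statement as formulated.

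That said, the paper does not actually prove this theorem. It is quoted as a standard fact from the linear programming literature, with an explicit citation to \cite[Theorem~IV.6.2 and Theorem~IV.7.2]{barvinok_course_2002}; no argument is supplied in the text. So there is nothing to compare your approach against: you have given a self-contained proof where the paper simply invokes the result. If you want to align with the paper's treatment, a one-line citation suffices; if you prefer to keep the proof, your Farkas-based derivation is the textbook one and would be appropriate.
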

Strong duality is a very powerful tool, as it allows one to switch between solving primal and dual problems at will.
Moreover, the general framework of linear programming is surprsingly versatile, because
many non-linear convex optimization problems can be converted into a corresponding LP.
Here, we content ourselves with two examples which will turn out to be important for our analysis.

\begin{example}[$\ell_1$-norm calculation, \cite{boyd_convex_2009} p. 294 ] \label{ex:l1_norm}
Let $\x \in \RR^n$ be an arbitrary vector. Then
\begin{eqnarray}
\| \x\|_{\ell_1} = \min_{\t \in \RR^n} & & \quad   \langle \1_n, \t \rangle \label{eq:unconstrained_l1} \\
\st & & \quad  -\t \leq \x \leq \t. \label{eq:unconstrained_l1_constraint}
\end{eqnarray}
Note that the constraint \eqref{eq:unconstrained_l1_constraint} implicitly assures $ \t \geq \0_n$.
\end{example}

\begin{example}[$\ell_\infty$-norm calculation, \cite{boyd_convex_2009} p. 293] \label{ex:l_infty_norm}
Let $\x \in \RR^n$ be an arbitrary vector. Then
\begin{eqnarray}
\| \x \|_{\ell_\infty} &=& \min_{v \in \RR} v \\
\st & & - v \1_n \leq \x \leq v \1_n. \label{eq:LPl_infty2}
\end{eqnarray}
Note that the constraint $- v \1_n \leq \x$ is redundant if the vector of interest obeys $\x \geq \0_n$.
Also, (\ref{eq:LPl_infty2}) implicitly assures $v \geq 0$.
\end{example}

The primal LPs in examples \ref{ex:l1_norm} and \ref{ex:l_infty_norm} are not yet in standard form (\ref{eq:standard_primal}).
However, they can be converted into it by applying some straightforward reformulations -- we will come back to this later.

Another useful feature of LPs is that different minimization procedures of the above kind can be combined in order to yield an LP for a more complicated optimization problem.
An instance of  such a combination is the following result which will turn out to be crucial for our analysis.

\begin{lemma}		\label{lem:aux1}
Let $\{ \x_1,\ldots, \x_L \} \subset \RR^n$ be an arbitrary family of $L$ vectors. Then
\begin{eqnarray*}
\underset{1 \leq i \leq L}{\max} \| \x_i \|_{\ell_1} =  \underset{\substack{\t_1,\ldots,\t_L \in \RR^n \\ v \in \RR}}{\minimize} & & \quad v \\
\st & & \quad \begin{rcases*}
v \geq \langle \1_n, \t_i \rangle & \\
-\t_i \leq \x_i \leq \t_i & \\
\end{rcases*} 1 \leq i \leq L	
\end{eqnarray*}
which is a primal LP, albeit not yet in standard form.
Also, the constraints implicitly assure $\t_1,\ldots,\t_L \geq \0_n$ and $v \geq 0$.
\end{lemma}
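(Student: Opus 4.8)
The plan is to reduce the claim to the single-vector identity of Example~\ref{ex:l1_norm} by exploiting the product structure of the feasible region. First I would record that, for each fixed $i$, the pair of constraints $-\t_i \leq \x_i \leq \t_i$ is equivalent to the componentwise bound $(\t_i)_k \geq |(\x_i)_k|$ for all $k$; equivalently $\t_i \geq |\x_i|$, where $|\x_i|$ denotes the vector of componentwise absolute values. In particular this forces $\t_i \geq \0_n$, which is the asserted implicit nonnegativity. Consequently $\langle \1_n, \t_i \rangle = \sum_k (\t_i)_k \geq \sum_k |(\x_i)_k| = \|\x_i\|_{\ell_1}$, with equality attained exactly at $\t_i = |\x_i|$. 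Since any feasible point obeys $v \geq \langle \1_n, \t_i \rangle \geq \|\x_i\|_{\ell_1} \geq 0$ for every $i$, this already yields both the implicit bound $v \geq 0$ and the lower estimate $v \geq \max_i \|\x_i\|_{\ell_1}$.

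Next I would eliminate $v$ from the program. For fixed feasible $\t_1,\ldots,\t_L$, the constraints $v \geq \langle \1_n, \t_i \rangle$ for all $i$ together with the minimization of $v$ force the optimal value $v = \max_i \langle \1_n, \t_i \rangle$. Hence the program is equivalent to $\min_{\t} \max_i \langle \1_n, \t_i \rangle$ over the feasible set $\{-\t_i \leq \x_i \leq \t_i\}_{i=1}^L$. Because the constraints decouple across $i$, this feasible set is a product set, and the objective is nondecreasing in each $\langle \1_n, \t_i \rangle$; minimizing the maximum is therefore achieved by minimizing each term separately, i.e.\ $\min_{\t} \max_i \langle \1_n, \t_i \rangle = \max_i \min_{\t_i} \langle \1_n, \t_i \rangle$. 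By Example~\ref{ex:l1_norm} the inner minima equal $\|\x_i\|_{\ell_1}$, so the value of the program is $\max_i \|\x_i\|_{\ell_1}$, as claimed.

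For rigor I would prove the min--max interchange through the two elementary inequalities. The direction ``$\geq$'' holds because any feasible $\t$ satisfies $\max_i \langle \1_n, \t_i \rangle \geq \langle \1_n, \t_j \rangle \geq \min_{\t_j} \langle \1_n, \t_j \rangle$ for every $j$, and taking the maximum over $j$ and then the infimum over $\t$ gives the bound. The direction ``$\leq$'' follows by evaluating the objective at the jointly feasible product point $\t_i = |\x_i|$, which attains $\max_i \|\x_i\|_{\ell_1}$. The remaining assertion that the optimization is a primal LP, albeit not in standard form, is immediate, since the objective $v$ and all constraints are linear in the decision variables $(\t_1,\ldots,\t_L,v)$. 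I do not expect a genuine obstacle here; the only point requiring care is the decoupling argument, namely observing that since the $\t_i$ are independent variables the optimal $v$ is driven entirely by the individual norms $\|\x_i\|_{\ell_1}$ and not by any interaction between distinct indices.
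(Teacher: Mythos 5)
Your proof is correct and follows essentially the same route as the paper's: both reduce the combined program to the single-vector identities of Examples~\ref{ex:l1_norm} and~\ref{ex:l_infty_norm} via the observation that the program computes $\max_{i}\|\x_i\|_{\ell_1}=\|\u\|_{\ell_\infty}$ for $\u=\sum_{i}\|\x_i\|_{\ell_1}\e_i$. The only difference is one of explicitness: where the paper simply asserts that the $\ell_\infty$-LP with $L$ $\ell_1$-LP ``subroutines'' yields the claim, you verify the composition directly through the two-sided bound ($v\geq\langle\1_n,\t_i\rangle\geq\|\x_i\|_{\ell_1}$ for every feasible point, with equality attained at $\t_i=|\x_i|$ and $v=\max_i\|\x_i\|_{\ell_1}$), which is a welcome tightening of an argument the paper leaves implicit.
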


\begin{proof}
We introduce the non-negative auxiliary vector
\begin{equation*}
\u := \sum_{i=1}^L \| \x_i \|_{\ell_1} \e_i \in \RR^L.
\end{equation*}
The equivalence
\begin{equation*}
\max_{i=1,\ldots,L} \| \x_i \|_{\ell_1} = \| \u \|_{\ell_\infty}
\end{equation*}
then follows from the definition of the $\ell_\infty$-norm.
Replacing this $\ell_\infty$-norm calculation by the corresponding LP (example \ref{ex:l_infty_norm} for non-negative vectors)
and including $L$ unconstrained $\ell_1$-norm calculations -- one for each component of $\u$ -- as ``subroutines'' (example \ref{ex:l1_norm})
yields the desired statement.
\end{proof}

Finally it is worthwhile to mention that constrained norm-minimization, e.g.
\begin{equation*}
\beta = \min_{\x \in \RR^n} \| \x \|_{\ell_1} \quad \st \quad A \x \geq \c,
\end{equation*}
can also be reformulated as a LP, because the constraint is linear.
To this end, simply include the additional linear constraint in the LP for calculating $\| \x \|_{\ell_1}$:
\begin{eqnarray}
\gamma = \min_{\x,\t\in \RR^n} & & \quad \langle \1_n, \t \rangle \label{eq:constrained_l1_norm_minimization} \\
\st & & \quad  -\t \leq \x \leq \t \nonumber \\
& & \quad  A \x \geq \c . \nonumber
\end{eqnarray}
Clearly, this is a LP.
Pushing this further, one can also handle certain types of non-linear constraints, e.g.
\begin{equation*}
\tilde{\gamma} = \min_{\x \in \RR^n} \| \x \|_{\ell_p}
\quad \st \quad
\| A \x \|_{\ell_q} \leq c
\end{equation*}
for $p,q \in \{1,\infty \}$ within the linear programming formalism.

\section{Useful results regarding LP's}
We can now use these concepts and techniques to obtain a linear programming formalism for a particular family of convex optimization problems that is relevant for our analysis. As detailed in the following two sections, the measures of direct causal influence \eqref{meas_causal} and of measurement dependence \eqref{meas_corr} can be cast as a $\ell_\infty$-norm and $\ell_1$-norm, respectively. This in turn allows us to state the associated equivalent dual problem for the minimization of each of these two measures, which is the scope of the following theorems.

\begin{theorem} \label{thm:1}
Let $A$ be a real $m \times n$-matrix,  $\left\{ M_i \right\}_{i=1}^L$ a family of $L$ real valued $k \times n$-matrices and let $\p \in \RR^m$ be an arbitrary vector.
Then, the convex optimization problem
\begin{eqnarray*}
\gamma =  \min_{\q \in \RR^n} & & \quad \max_{1 \leq i \leq L} \| M_i \q \|_{\ell_1} \label{eq:thm1} \\
\st & & \quad A \q = \p \nonumber \\
& & \quad \langle \1_n, \q \rangle = 1 \nonumber \\
& & \quad \q \geq 0 \nonumber
\end{eqnarray*}
can be reformulated as a primal LP. Its associated dual problem is given by
\begin{eqnarray*}
\underset{\substack{ \y_i \in \RR^k,\z \in \RR^m \\ w_i,u \in \RR }}{\mathrm{maximize}} & & \quad \langle \p, \z \rangle + u \\
\st & & \quad A^T \z + u \1_n \leq \sum_{i=1}^L  M_i^T \y_i \\
& & \quad - w_i \1_k \leq \y_i \leq w_i \1_k  \quad i = 1,\ldots,L \\
& & \quad \sum_{i=1}^L w_i \leq 1, \\
& & \quad w_1,\ldots,w_L \geq 0.
\end{eqnarray*}
\end{theorem}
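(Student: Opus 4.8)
The goal is to show that the primal problem—minimizing $\max_i \|M_i\q\|_{\ell_1}$ subject to linear equality constraints $A\q=\p$, a normalization $\langle\1_n,\q\rangle=1$, and nonnegativity—can be recast as an LP, and then to identify its dual as the stated maximization. My plan is to proceed in two stages: first convert the primal into standard LP form by introducing auxiliary variables, then mechanically compute the dual using Theorem~\ref{thm:duality} (strong duality), and finally verify that the dual variables appearing match the claimed statement.

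**The primal reformulation.** The inner $\max_i \|M_i\q\|_{\ell_1}$ is handled exactly by Lemma~\ref{lem:aux1}: I would introduce slack vectors $\t_1,\dots,\t_L\in\RR^n$ (wait—here the matrices are $k\times n$, so the $\ell_1$ norms live in $\RR^k$, and the slacks should be $\t_i\in\RR^k$) together with a scalar $v\in\RR$, replacing the objective by $\min v$ subject to $v\geq\langle\1_k,\t_i\rangle$ and $-\t_i\leq M_i\q\leq\t_i$ for each $i$. Then I append the external constraints $A\q=\p$, $\langle\1_n,\q\rangle=1$, and $\q\geq\0_n$. The combined problem is linear in the stacked variable $(\q,\t_1,\dots,\t_L,v)$, hence a primal LP, though not yet literally in the standard form~\eqref{eq:standard_primal}. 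To reach standard form I would rewrite each equality $A\q=\p$ as the pair $A\q\geq\p$, $-A\q\geq-\p$ (and similarly for the normalization), and absorb the free scalar $v$ by splitting it or by recognizing it is forced nonnegative. The two-sided constraints $-\t_i\leq M_i\q\leq\t_i$ likewise split into $M_i\q+\t_i\geq\0_k$ and $-M_i\q+\t_i\geq\0_k$.

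**Taking the dual.** With the primal assembled, I assign a dual multiplier to each constraint family: a vector $\z\in\RR^m$ (or rather two nonnegative multipliers for the two inequalities coming from $A\q=\p$, whose difference is the free variable $\z$) for the equality $A\q=\p$; a free scalar $u$ for the normalization $\langle\1_n,\q\rangle=1$; nonnegative multipliers, say $\bm\mu_i,\bm\nu_i\in\RR^k$, for the two sides of $-\t_i\leq M_i\q\leq\t_i$; and nonnegative scalars for $v\geq\langle\1_k,\t_i\rangle$. The dual objective collects $\langle\p,\z\rangle+u$ from the equality right-hand sides. Forming the dual constraints by grouping the primal variables: the $\q$-block yields $A^T\z+u\1_n\leq\sum_i M_i^T\y_i$ after setting $\y_i:=\bm\nu_i-\bm\mu_i$; the $\t_i$-blocks yield $-w_i\1_k\leq\y_i\leq w_i\1_k$ once $w_i$ denotes the multiplier on $v\geq\langle\1_k,\t_i\rangle$; and the $v$-block enforces $\sum_i w_i\leq 1$ with $w_i\geq 0$. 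Strong duality (Theorem~\ref{thm:duality}) guarantees the two optima coincide provided primal feasibility holds, which it does whenever the observed $\p$ is reproducible.

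**The main obstacle.** The genuinely delicate part is the bookkeeping in passing from the two-sided and equality constraints to signed multipliers: a free dual variable $\z$ arises as a difference of two nonnegative multipliers on $A\q\geq\p$ and $-A\q\geq-\p$, and the combination $\y_i=\bm\nu_i-\bm\mu_i$ with the simultaneous bound $|\y_i|\leq w_i\1_k$ must be derived carefully from the complementary-slackness structure rather than asserted. I expect the substitution $\y_i=\bm\nu_i-\bm\mu_i$ together with $\bm\mu_i+\bm\nu_i$ being bounded by $w_i\1_k$ to require the observation that, at optimality, one can always take the bound $-w_i\1_k\leq\y_i\leq w_i\1_k$ without loss. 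Verifying that the resulting dual is exactly the stated one—checking every sign and that no spurious constraints survive—is the step most prone to error, whereas establishing that it \emph{is} an LP and that \emph{some} dual exists is immediate from the preceding machinery.
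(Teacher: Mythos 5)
Your proposal follows essentially the same route as the paper: reformulate the primal via Lemma~\ref{lem:aux1} with slack vectors $\t_i\in\RR^k$ and a scalar $v$, split the equality and two-sided constraints into standard-form inequalities, and then read off the dual with $\z$, $u$, and $\y_i$ arising as differences of the nonnegative multipliers on the paired inequalities, exactly as in the paper's substitutions $\z=\z'-\z''$, $u=u'-u''$, $\y_i=\y_i''-\y_i'$. The only loose point is your appeal to ``complementary-slackness structure'' for the bound $-w_i\1_n\leq\y_i\leq w_i\1_k$; in fact this is just the elementary equivalence that $\y_i'+\y_i''\leq w_i\1_k$ with $\y_i',\y_i''\geq\0_k$ holds for some decomposition $\y_i=\y_i''-\y_i'$ if and only if $|\y_i|\leq w_i\1_k$ componentwise (take positive and negative parts), which is how the paper closes the argument.
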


\begin{proof}
Combining Lemma \ref{lem:aux1} -- for $\x_i = M_i \q \in \RR^k$ for $i=1,\ldots,L$ -- with the constrained minimization argument from (\ref{eq:constrained_l1_norm_minimization})
shows that the convex optimization problem (\ref{eq:thm1}) is equivalent to solving
\begin{eqnarray}
\underset{\substack{ \t_1,\ldots,\t_L \in \RR^k, \q \in \RR^n \\ v \in \RR}}{\mathrm{minimize}} & & \quad v \label{eq:thm1_aux1}\\
\st & & \quad A \q = \p \nonumber \\
& & \quad \langle \1_n, \q \rangle = 1 \nonumber\\
& & \quad \begin{rcases*}
v \geq \langle \1_k, \t_i \rangle & \nonumber\\
-\t_i \leq M_i \q \leq  \t_i & \nonumber\\
\end{rcases*} i=1,\ldots,L \nonumber\\
&& \quad \q \geq 0 \nonumber
\end{eqnarray}
which is clearly a LP. Note that the remaining optimization variables $v \in \RR$ and $\t_i \in \RR^k$ are also implicitly constrained to be non-negative.
So, in order to convert (\ref{eq:thm1_aux1}) into a primal LP in standard form  (\ref{eq:standard_primal}), we define
\begin{eqnarray*}
\bxi &:=& v \oplus \bigoplus_{i=1}^L \t_i \oplus \q,
\quad \c := 1 \bigoplus_{i=1}^L \0_k \oplus \0_n \quad \textrm{and} \\
\b &:=& (0)^{\oplus L} \oplus \left( \0_k \oplus \0_k \right)^{\oplus L} \oplus \p \oplus (-\p) \oplus 1 \oplus (-1).
\end{eqnarray*}
Counting the dimensions of the resulting vector spaces reveals $\bxi, \c \in \RR^{1+Lk + n}$ and $\b \in \RR^{L + 2 L k + 2m + 2}$.
Also, the (implicit and explicit) non-negativity constraints on $v, \t_1,\ldots,\t_L$ and $\q$ guarantee $\bxi \geq \0_{1 + Lk + n}$.
Due to our choice of $\b$, we can incorporate all relevant constraints of (\ref{eq:thm1_aux1}) in the compact expression
\begin{equation*}
\Phi \bxi \geq \b,
\end{equation*}
where $\Phi$ is the $(L + 2Lk + 2m + 2) \times (1 + Lk +n)$-matrix defined by
\begin{equation*}
\Phi =
\left(
\begin{array}{cccccc}
1 & - \1_k^T & \0_k^T & \cdots & \0_k^T & \0_n^T \\
\vdots & & & & & \vdots \\
1 & \0_k^T & \cdots & \0_k^T & - \1_k^T & \0_n^T \\
\0_k & \Id_{k \times k} & \o_{k \times k} & \cdots & \o_{k \times k} & M_1 \\
\0_k & \Id_{k \times k} & \o_{k \times k} & \cdots & \o_{k \times k} & -M_1 \\
\vdots & & & & & \vdots \\
\0_k & \o_{k \times k} & \cdots & \o_{k \times k} & \Id_{k \times k} & M_L \\
\0_k & \o_{k \times k} & \cdots & \o_{k \times k} & \Id_{k \times k} & -M_L \\
\0_m & \o_{m \times k} & \cdots & \cdots & \o_{m \times k} & A \\
\0_m & \o_{m \times k} & \cdots & \cdots & \o_{m \times k} & -A \\
0 & \0_k^T & \cdots & \cdots & \0_k^T & \1_n^T \\
0 & \0_k^T & \cdots & \cdots & \0_k^T & -\1_n^T
\end{array}
\right)
\end{equation*}
in the (extended) standard bases of the spaces $\RR^{1+Lk+n}$ and
$\RR^{L + 2Lk + 2m +2}$.
Our definitions of $\bxi, \c, \b $ and $\Phi$ now indeed convert (\ref{eq:thm1_aux1}) into primal standard form (\ref{eq:standard_primal}).
Its dual then simply corresponds to (\ref{eq:standard_dual}) which can be further simplified.
The structure of $\b$ suggests decomposing the dual variable $\bzeta \in \RR^{L + 2Lk + 2m + 2}$ into
\begin{equation}
\bzeta := \bigoplus_{i=1}^L w_i \bigoplus_{i=1}^L \left( \y_i' \oplus \y_i'' \right) \oplus \z' \oplus \z'' \oplus u' \oplus u''
\label{eq:thm1_zeta}
\end{equation}
with $w_i,u',u'' \in \RR$, $\y_i',\y_i'' \in \RR^k$ and $\z',\z''\in \RR^m$.
Using this decomposition of $\bzeta$, we obtain the following constraints from $\Phi^T \bzeta \leq \c$:
\begin{eqnarray*}
A^T (\z' - \z'' ) + \1_n (u' - u'' ) & \leq & \sum_{i=1}^L M_i \left( \y_i'' - \y_i' \right), \\
\y_i' + \y_i'' &\leq& w_i \1_k \quad \textrm{for } i=1,\ldots, L, \\
\sum_{i=1}^L w_i & \leq & 1.
\end{eqnarray*}
Also, due to $\bzeta \geq \0_{L + 2Lk + 2l+2}$, all the optimization variables are non-negative.
The objective function corresponds to
\begin{equation*}
\langle \bzeta, \b \rangle
= \langle \p, \z' - \z'' \rangle + u' - u''.
\end{equation*}
The particular form of objective function and constraints suggests to replace the non-negative variables $\z',\z'' \in \RR^m$ and $u',u'' \in \RR$ by
\begin{equation*}
\z := \z' - \z''
\quad \textrm{and} \quad
u := u' - u''
\end{equation*}
which are not constrained to be non-negative anymore.
Also, $\y_i' + \y_i'' \leq w_i \1_k$ together with $ \y_i',\y_i'' \geq 0$ implies the equivalent constraint
\begin{equation*}
- w_i \1_k \leq \y_i''- \y_i' \leq w_i \1_k
\end{equation*}
for all $1 \leq i \leq L$. This motivates to define $ \y_i := \y_i'' - \y_i'$ which is bounded by the above inequality chain, but also not constrained to be non-negative.
Putting everything together yields the desired statement
\end{proof}

\begin{theorem} \label{thm:2}
Let $A$ be a real valued $m \times n$ matrix, $\left\{ M_i \right\}_{i=1}^L$ be a family of real valued $k \times n$-matrices, $N$ a real valued $l \times n$-matrix and let $\p \in \RR^m$ as well as $c \in \RR$ be arbitrary.
The convex optimization problem
\begin{eqnarray}
\gamma = \min_{\q \in \RR^n} & & \quad \| N \q \|_{\ell_\infty} \label{eq:thm2} \label{eq:thm2_primal}\\
\st & & \quad \max_{1 \leq i \leq L} \| M_i \q \|_{\ell_1} \leq c \nonumber \nonumber\\
& & \quad A \q = \p \nonumber \nonumber\\
& & \quad \langle \1_n, \q \rangle = 1 \nonumber \\
& & \quad \q \geq 0 \nonumber
\end{eqnarray}
can be converted into a primal LP. Its associated dual LP corresponds to
\begin{eqnarray}
 \beta = \underset{\substack{\x \in \RR^l,  \y_i \in \RR^k, \z \in \RR^m \\ u,v,w_i \in \RR} }{\max} & & \quad \langle p, z \rangle + u - c v \label{eq:thm8_dual}\\
\st & & \quad A^T \z + u \1_n \leq \sum_{i=1}^L M_i^T \y_i + N^T \x \nonumber\\
& & \quad - w_i \1_k \leq \y_i \leq w_i \1_k \quad i =1 , \ldots, L \nonumber \\
& & \quad \sum_{i=1}^L w_i \leq v \nonumber \\
& & \quad \| \x \|_{\ell_1} \leq 1 \nonumber \\
& &  w_1,\ldots,w_L, v \geq 0. \nonumber
\end{eqnarray}
\end{theorem}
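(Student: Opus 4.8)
The plan is to mimic the proof of Theorem \ref{thm:1} almost verbatim, with the roles of the $\ell_\infty$- and $\ell_1$-expressions interchanged: here $\norm{N\q}_{\ell_\infty}$ is the objective while $\max_{1\leq i \leq L}\norm{M_i\q}_{\ell_1}\leq c$ is a constraint. First I would linearize the objective via Example \ref{ex:l_infty_norm}: introduce a scalar $r\in\RR$ and replace $\min\norm{N\q}_{\ell_\infty}$ by $\min r$ subject to $-r\1_l \leq N\q \leq r\1_l$. Next I would linearize the norm constraint using the machinery behind Lemma \ref{lem:aux1} and Example \ref{ex:l1_norm}: introduce auxiliary vectors $\t_1,\ldots,\t_L\in\RR^k$ with $-\t_i \leq M_i\q \leq \t_i$ (forcing $\t_i\geq|M_i\q|$ componentwise, hence $\t_i\geq\0_k$) together with a single scalar $s\in\RR$ obeying $s\geq\langle\1_k,\t_i\rangle$ for every $i$ and $s\leq c$. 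Since $\max_i\norm{M_i\q}_{\ell_1}$ equals the minimum admissible value of $s$, the combined constraints $s\geq\langle\1_k,\t_i\rangle$ and $s\leq c$ are exactly equivalent to $\max_i\norm{M_i\q}_{\ell_1}\leq c$. Appending the original linear relations $A\q=\p$, $\langle\1_n,\q\rangle=1$ and $\q\geq\0_n$ then yields a genuine LP in the variables $r,\t_1,\ldots,\t_L,s,\q$, with $r\geq0$ and $s\geq0$ implicit.

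The second step is pure bookkeeping: cast this LP in the standard primal form \eqref{eq:standard_primal}. I would set $\bxi := r\oplus\bigoplus_{i=1}^L\t_i\oplus s\oplus\q$ and choose $\c$ to carry the coefficient $1$ in the $r$-slot and zeros elsewhere, so that $\langle\c,\bxi\rangle=r$. All constraints are collected into a single block $\Phi\bxi\geq\b$ by rewriting each equality $A\q=\p$ and $\langle\1_n,\q\rangle=1$ as a pair of opposite inequalities, each box $-\t_i\leq M_i\q\leq\t_i$ and $-r\1_l\leq N\q\leq r\1_l$ as two inequalities, each $s\geq\langle\1_k,\t_i\rangle$ as $s-\langle\1_k,\t_i\rangle\geq0$, and $s\leq c$ as $-s\geq -c$. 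The implicit non-negativity of $r,s,\t_i$ and the explicit $\q\geq\0_n$ guarantee $\bxi\geq\0$, exactly as in Theorem \ref{thm:1}.

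Finally I would form the dual \eqref{eq:standard_dual} and simplify. Guided by the block structure of $\b$, I decompose $\bzeta$ into $\x',\x''\in\RR^l$ (dual to the two halves of the $\ell_\infty$ box), $\y_i',\y_i''\in\RR^k$ (dual to the $M_i$-boxes), $w_i\in\RR$ (dual to $s\geq\langle\1_k,\t_i\rangle$), $v\in\RR$ (dual to $-s\geq -c$), and $\z',\z'',u',u''$ (dual to the two equalities). Merging each sign-split pair into an unconstrained variable $\z:=\z'-\z''$, $u:=u'-u''$, $\y_i:=\y_i''-\y_i'$, $\x:=\x''-\x'$ turns the $\q$-column constraint into $A^T\z+u\1_n\leq\sum_i M_i^T\y_i+N^T\x$ and the $\t_i$-boxes into $-w_i\1_k\leq\y_i\leq w_i\1_k$. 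The $r$-column (carrying objective coefficient $1$) gives $\langle\1_l,\x'+\x''\rangle\leq1$, i.e.\ $\norm{\x}_{\ell_1}\leq1$ after the merge, and the $s$-column gives $\sum_i w_i - v\leq0$, while the $\b$-entry $-c$ contributes $-cv$ to the objective; together with $\langle\p,\z\rangle+u$ this reproduces \eqref{eq:thm8_dual}. The step demanding the most care is precisely this dualization of the scalar $s$: one must check that $w_i$ emerge with the right signs and that the single pair (column $s$, entry $-c$) produces \emph{both} the coupling $\sum_i w_i\leq v$ and the objective term $-cv$, while simultaneously confirming that it is the $\ell_\infty$-objective—not the $\ell_1$-constraint—that yields the normalization $\norm{\x}_{\ell_1}\leq1$. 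Since the reformulated program is feasible whenever $\p$ is realizable, Theorem \ref{thm:duality} then certifies $\gamma=\beta$.
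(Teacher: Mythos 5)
Your proposal is correct and follows essentially the same route as the paper: linearize the $\ell_\infty$-objective and the $\ell_1$-constraint with auxiliary variables, cast the result in standard primal form, and read off the dual by the same block decomposition and sign-merging of the dual variables (the paper merely organizes this by extending the $\bxi$, $\b$, $\Phi$ already built in the proof of Theorem \ref{thm:1}, which is a cosmetic difference). Your bookkeeping of the $s$-column (yielding both $\sum_i w_i \leq v$ and the $-cv$ objective term) and of the $r$-column (yielding $\| \x \|_{\ell_1} \leq 1$) matches the paper's derivation.
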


\begin{proof}

Proceeding along similar lines as in the previous proof one can show that (\ref{eq:thm2_primal}) is equivalent to solving
\begin{eqnarray}
\underset{\substack{\t_1,\ldots,\t_L \in \RR^k, \q \in \RR^n \\ v,\tilde{v} \in \RR}}{\minimize} & & \quad \tilde{v} \label{eq:thm2_aux1}\\
\st & & \quad - \tilde{v} \1_l \leq N \q \leq \tilde{v} \1_l \nonumber\\
& & \quad v \leq c \nonumber\\
& & \quad \begin{rcases*}
v \geq \langle \1_k, \t_i \rangle & \\
- \t_i \leq M_i \q \leq\t_i
\end{rcases*}  i = 1,\ldots,L \nonumber\\
& & \quad A \q = \p \nonumber \\
& & \quad \langle \1_n, \q \rangle = 1 \nonumber \\
& & \quad \q \geq \0_n \nonumber,
\end{eqnarray}
which is again clearly a primal LP.
Moreover, it strongly resembles the linear program (\ref{eq:thm1_aux1}).
Indeed, defining
\begin{eqnarray*}
\tilde{\c} &:=& 1 \oplus 0 \bigoplus_{i=1}^L \0_k \oplus \0_n, \\
\end{eqnarray*}
and extending $\bxi,\b$, as well as $\Phi$ from the proof of Theorem \ref{thm:1} to
\begin{eqnarray*}
\tilde{\bxi} := \tilde{v} \oplus \bxi,
\quad \tilde{\b} := \0_l \oplus \0_l \oplus (-c) \oplus \b
\end{eqnarray*}
and
\begin{equation*}
\widetilde{\Phi} =
\left(
\begin{array}{cc}
\1_{l} \oplus \1_l \oplus 0 & B \\
\0_{L+2Lk+2m + 2} & \Phi
\end{array}
\right)
,
\end{equation*}
where
\begin{equation*}
B :=
\left(
\begin{array}{cccccc}
 \0_l & \o_{l \times k} & \cdots & \cdots & \o_{l \times k} &  N \\
 \0_l & \o_{l \times k} & \cdots & \cdots & \o_{l \times k} & -N \\
 -1 & \0_k^T & \cdots & \cdots & \0_k^T & \0_n^T
\end{array}
\right)
\end{equation*}
converts (\ref{eq:thm2_aux1}) into primal standard form.
Going to the dual and simplifying it in a similar way
as shown in the previous proof
 -- decompose $\tilde{\bzeta}$ into  $\x' \oplus \x'' \oplus v \oplus \bzeta$, where $\bzeta$ was defined in (\ref{eq:thm1_zeta}) --
yields the desired statement upon noticing that
$ \langle \1_l, \x' + \x'' \rangle \leq 1$ together with $\x', \x'' \geq \0_l$
is equivalent to demanding that $\x := \x' - \x''$ obeys $\| \x \|_{\ell_1} \leq 1$,
but is not constrained to be non-negative anymore.
\end{proof}

\begin{corollary} \label{cor:3}
Suppose the $\ell_1$-norm constraint in the convex optimization (\ref{eq:thm2}) is omitted,
then the corresponding dual LP simplifies to
\begin{eqnarray}
\beta = \max_{\x \in \RR^l, \z \in \RR^m, u \in \RR} & & \quad \langle \p, \z \rangle + u \label{eq:cor3}\\
\st & & \quad A^T \z + u \1_n \leq N^T \x \nonumber \\
& & \quad \| \x \|_{\ell_1} \leq 1. \nonumber
\end{eqnarray}
If the normalization condition $\langle \1_n, \q \rangle = 1$ is dropped as well,
the optimization parameter $u$ assumes 0 and need not be considered in the dual optimization.
\end{corollary}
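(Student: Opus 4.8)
The plan is to derive Corollary \ref{cor:3} as a specialization of Theorem \ref{thm:2} rather than re-running the standard-form reduction from scratch. Omitting the $\ell_1$-norm constraint $\max_{1\le i \le L}\|M_i\q\|_{\ell_1}\le c$ from the primal \eqref{eq:thm2_primal} is, at the level of the intermediate program \eqref{eq:thm2_aux1}, the same as deleting the auxiliary vectors $\t_1,\dots,\t_L$, the scalar $v$, and the constraints $v\ge\langle\1_k,\t_i\rangle$, $-\t_i\le M_i\q\le\t_i$ and $v\le c$. In the standard-form data $\widetilde{\Phi},\tilde{\bxi},\tilde{\b},\tilde{\c}$ this simply removes the corresponding rows and columns, and passing to the dual \eqref{eq:standard_dual} then removes precisely the dual variables $\y_i$, $w_i$ and $v$. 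I would make this bookkeeping transparent by instead reading off the conclusion directly from the already-simplified dual \eqref{eq:thm8_dual}.

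Concretely, I would interpret the omission as the limit $c\to\infty$ in \eqref{eq:thm8_dual}: since the objective contains the term $-cv$ with $v\ge 0$ and the problem is a maximization, any $v>0$ sends the objective to $-\infty$ as $c\to\infty$, so an optimizer must take $v=0$. The constraint $\sum_{i=1}^L w_i\le v=0$ together with $w_i\ge0$ then forces $w_1=\dots=w_L=0$, and the sandwich $-w_i\1_k\le\y_i\le w_i\1_k$ forces every $\y_i=\0_k$. Substituting these values back removes the term $\sum_{i=1}^L M_i^T\y_i$ from the linear constraint and the term $-cv$ from the objective, leaving exactly the maximization of $\langle\p,\z\rangle+u$ subject to $A^T\z+u\1_n\le N^T\x$ and $\|\x\|_{\ell_1}\le1$, which is \eqref{eq:cor3}.

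For the second claim I would track the multiplier attached to the normalization constraint. In the reduction of Theorem \ref{thm:1} (inherited by Theorem \ref{thm:2}) the two rows $\pm\1_n^T$ of $\Phi$ and the entries $\pm1$ of $\b$ encode $\langle\1_n,\q\rangle=1$, and the dual scalar $u=u'-u''$ is exactly their multiplier, contributing the $u\1_n$ term to the dual constraint and the $+u$ term to the objective. Dropping the primal normalization deletes those rows and the associated dual variables $u',u''$, so $u$ may be set to $0$ and omitted, as claimed.

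The content of the corollary is thus essentially bookkeeping; the one point that genuinely needs checking is that Theorem \ref{thm:duality} still applies after the constraints are removed. The slick $c\to\infty$ argument must be backstopped by the structural row/column deletion, so that the simplified pair is a genuine primal/dual pair in standard form, and I would verify that the reduced primal is still feasible -- any $\q$ with $A\q=\p$, $\langle\1_n,\q\rangle=1$ and $\q\ge\0_n$ works, this being inherited from the feasibility already assumed in Theorem \ref{thm:2} -- so that strong duality $\gamma=\beta$ continues to hold for the simplified programs.
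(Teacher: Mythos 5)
Your proposal is correct and follows essentially the same route as the paper: the first part is obtained by the identical $c\to\infty$ argument forcing $v=0$, hence $w_i=0$ and $\y_i=\0_k$, and the second part by tracing the dual multiplier $u=u'-u''$ back to the rows of $\Phi$ and $\b$ encoding $\langle\1_n,\q\rangle=1$. Your added remarks on backstopping the limit by explicit row/column deletion and on preserved primal feasibility are sensible extra care but do not constitute a different argument.
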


\begin{proof}
Omitting the $\ell_1$-norm constraint is equivalent to letting the constraint $c$ go to infinity.
Since $(-c v)$ is part of the dual's objective function (\ref{eq:thm8_dual}), this limit enforces $v = 0$.
This in turn demands $w_i = 0$ and consequently $\y_i = \0_k$ for all $i=1,\ldots,L$.
As a result, we obtain the first desired statement.

The second simplification requires a closer look at the proof of Theorem \ref{thm:2}. Doing so reveals that the constraint $\langle \1_n, \q \rangle = 1$ results
in the additional dual optimization parameter $u$. Omitting this constraint in the primal therefore implies that $u$ has to be dropped accordingly.
\end{proof}

Finally we are going to present the derivation of the second part of Theorem \ref{thm:min_max_duality}, namely that
solving an arbitrary feasible primal LP (in standard form), is equivalent to maximizing the dual problem over finitely many points -- the vertices of the dual feasible set.

\begin{proposition} \label{prop:boundedness}
\label{prop_dual}
Consider a primal feasible LP whose optimal value $\gamma$ is bounded from below.
Then this optimum is attained at one vertex $ \d_i $ of the dual feasible region
$\CD:= \left\{ \bzeta \in \RR^m: \Phi^T \zeta \leq \c, \bzeta \geq \0_m \right\}$:
\begin{equation*}
\gamma = \beta = \max_{1 \leq i \leq K} \langle \d_i, \b \rangle,
\end{equation*}
Possible unbounded directions (rays) of $\CD$ can be safely ignored.
\end{proposition}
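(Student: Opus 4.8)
The plan is to combine strong duality with the structure theorem for polyhedra. First I would invoke Theorem~\ref{thm:duality}: since the primal is feasible and $\gamma$ is bounded from below, the primal optimum is finite, so strong duality yields $\gamma = \beta$, and in particular the dual is feasible with a finite optimum $\beta = \max_{\bzeta \in \CD} \langle \bzeta, \b \rangle$. (If the dual were infeasible, $\beta$ would be $-\infty$, forcing $\gamma = -\infty$ against the boundedness hypothesis.) It therefore suffices to show that this maximum over the polyhedron $\CD = \{\bzeta \in \RR^m : \Phi^T \bzeta \leq \c, \, \bzeta \geq \0_m\}$ is attained at one of its vertices.

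Next I would exploit the geometry of $\CD$. Because the defining constraints include $\bzeta \geq \0_m$, the set $\CD$ lies in the non-negative orthant and hence contains no affine line; a nonempty line-free polyhedron is pointed and thus has at least one vertex. By the Minkowski--Weyl decomposition theorem, such a $\CD$ splits as
\[
\CD = \mathrm{conv}\{\d_1,\dots,\d_K\} + \Big\{ \textstyle\sum_{j=1}^M \mu_j \mathbf{r}_j : \mu_j \geq 0 \Big\},
\]
where $\d_1,\dots,\d_K$ are its finitely many vertices and $\mathbf{r}_1,\dots,\mathbf{r}_M$ generate its recession cone $\{\mathbf{r} : \Phi^T \mathbf{r} \leq \0, \, \mathbf{r} \geq \0_m\}$. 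Every $\bzeta \in \CD$ can thus be written as $\bzeta = \sum_i \lambda_i \d_i + \sum_j \mu_j \mathbf{r}_j$ with $\lambda_i \geq 0$, $\sum_i \lambda_i = 1$, and $\mu_j \geq 0$.

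The decisive step is to dispose of the rays. Evaluating the objective gives $\langle \bzeta, \b\rangle = \sum_i \lambda_i \langle \d_i, \b\rangle + \sum_j \mu_j \langle \mathbf{r}_j, \b\rangle$. If some ray satisfied $\langle \mathbf{r}_j, \b\rangle > 0$, then sending the corresponding $\mu_j \to \infty$ would drive the objective to $+\infty$, contradicting the finiteness of $\beta$ established above; hence $\langle \mathbf{r}_j, \b\rangle \leq 0$ for every $j$. Consequently the ray contributions can only decrease the objective, so the maximum is attained with all $\mu_j = 0$, i.e.\ on $\mathrm{conv}\{\d_1,\dots,\d_K\}$. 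This is precisely the sense in which the unbounded directions of $\CD$ can be safely ignored. On the remaining polytope the objective is a convex combination of its vertex values, so $\langle \bzeta, \b\rangle = \sum_i \lambda_i \langle \d_i, \b\rangle \leq \max_{1\leq i\leq K}\langle \d_i, \b\rangle$, with equality achieved by taking $\bzeta$ equal to a maximizing vertex.

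Combining this with $\gamma = \beta$ would give $\gamma = \max_{1\leq i\leq K}\langle \d_i, \b\rangle$, as claimed. I expect the main obstacle to be the ray argument together with the clean invocation of the Minkowski--Weyl decomposition: one has to argue carefully that pointedness (inherited from the orthant constraint $\bzeta \geq \0_m$) guarantees both finitely many vertices and a finitely generated recession cone, and that finiteness of $\beta$ forces the objective to be non-positive along every recession direction, which is exactly what legitimizes discarding the rays.
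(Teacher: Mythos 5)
Your proposal is correct and follows the same overall route as the paper's proof: invoke strong duality to get $\gamma=\beta$ finite, decompose the dual feasible region into a bounded polytope plus an unbounded conic part, show the conic part cannot contribute to the maximization, and conclude by the fact that a linear functional on a polytope attains its maximum at a vertex. The one place where you genuinely diverge is in how the cone is handled. The paper identifies the unbounded directions with $\ker(\Phi^T)\cap\RR^m_+$ (under a simplifying assumption $\c\geq\0_n$) and shows directly, by combining the primal constraint $\Phi\bxi\geq\b$ with $\bzeta\geq\0_m$, that $\langle\bar{\bzeta},\b\rangle\leq\langle\Phi^T\bar{\bzeta},\bxi\rangle=0$ for every such direction. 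You instead work with the full recession cone $\{\mathbf{r}\geq\0_m:\Phi^T\mathbf{r}\leq\0_n\}$, note that $\bzeta\geq\0_m$ makes $\CD$ line-free and hence pointed so that Minkowski--Weyl applies, and rule out rays with $\langle\mathbf{r}_j,\b\rangle>0$ by contradiction with the finiteness of $\beta$. Your version is the textbook-standard one and has the advantage of not needing the paper's assumption on $\c$ and of using the correct (generally larger) recession cone; the paper's version buys an explicit, quantitative statement ($\langle\bar{\bzeta},\b\rangle\leq 0$ computed from weak duality) at the cost of a slightly loose identification of the unbounded part. Both are sound; no gap.
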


Note that all the measures we consider -- \eqref{meas_causal}, \eqref{meas_corr} and \eqref{eq:bilocality_measure} in the main text --  are non-negative by construction.
Consequently, any reformulation of calculating (or optimizing over) these measures as a primal LP results in a bounded optimal value $\gamma \geq 0$.
Hence, Proposition \ref{prop:boundedness} is applicable, provided there is at least one hidden variable that reproduces the observed distribution, thus establishing that the LP is primal feasible.

Proposition \ref{prop:boundedness} establishes that the relevant part of the dual feasible region is bounded.
It can be deduced from duality -- Theorem \ref{thm:duality} --  and is standard.
In order to be self-contained, we provide a slightly different proof that exploits the geometry of linear programs more explicitly.

\begin{proof}[Proof of Proposition \ref{prop:boundedness}]

The fact that the primal LP is feasible and bounded assures that there is at least one dual feasible point via strong duality -- Theorem \ref{thm:duality}.
The dual feasible region $\CD$ is defined by $n+m$ linear inequalities and therefore has the structure of a convex polyhedron.
We have just established that this polyhedron is non-empty, but it is not necessarily bounded.
To see this, suppose for now that $\c \geq \0_n$ holds (this is not necessary, but will simplify our argument).
If $\Phi^T$ has a non-trivial kernel, then each element $\bar{\bzeta} \in \ker \left( \Phi^T \right) \cap \RR^m_+$
is not affected by the linear inequalities, because
\begin{equation*}
 \bar{\bzeta} \geq \0_m \quad \textrm{and} \quad  \Phi^T \bar{\bzeta} = \0_n \leq \c.
\end{equation*}
Consequently, $\CD$ contains the convex cone $\CC := \ker \left( \Phi^T \right) \cap \RR^m_+$.
Conversely, it is easy to show that the unbounded part of $\CD$ is fully contained in $\CC$.
This allows us to make a Minkowski decomposition
\begin{equation*}
\CD = \CC + \CP = \left\{ c + p: c \in C, p \in P \right\},
\end{equation*}
where $\CC$ is the unbounded conic part and $\CP$ denotes the polyhedron's remaining part.
We now aim to show that elements $\bar{\bzeta} \in \CC$ do not contribute to the actual optimization procedure and can therefore safely be ignored.
To this end, we combine the primal problem's (\ref{eq:standard_primal}) constraint $\Phi \bxi - \b \geq \0_m$ with the dual constraint $\bzeta \geq \0_m$ to obtain
$
\langle \bzeta, \b \rangle \leq \langle \bzeta, \Phi \bxi \rangle
$
for any primal feasible $\bxi \in \RR^n$.
Such a $\bxi$ is guaranteed to exist due to Theorem \ref{thm:duality} and
in particular implies for any $\bar{\bzeta} \in \CC$:
\begin{equation*}
\langle \bar{\bzeta}, \b \rangle
\leq \langle \bar{\bzeta}, \Phi \bxi \rangle
= \langle \Phi^T \bar{\bzeta}, \bxi \rangle
= 0.
\end{equation*}
Here, the last equality is due to $\bar{\bzeta} \in \ker \left( \Phi^T \right)$.
Therefore elements of $\CC$ manifestly do not contribute to the maximization and we can focus on the remaining set $\CP$.
By construction, $\CP$ is a bounded polyhedron and thus a polytope which can be characterized as the convex hull $\mathrm{conv} (\d_1,\ldots,\d_K)$ of its extremal points (Weyl-Minkowski Theorem \cite[Corollary 4.3]{barvinok_course_2002}).
However, it is a well known fact that the maximum of a linear (or more generally: any concave) function over a convex polytope is attained at one of its extreme pointes, i.e. vertices.
\end{proof}

\section{Relaxation of Locality}

In this section we will analyze the relaxation of the locality assumption, as exemplified by the DAGs depicted in \figref{fig:models}b--d.
In particular, we will show that evaluating the minimal direct causal influence -- see equation \eqref{meas_causal} in the main text -- that is required to simulate a given non-local distribution can be recast as a LP. Consequently, it can be determined efficiently for any observed probability distribution.

We begin analyzing in details the scenario depicted in \figref{fig:models}c. There, the input $X$ of Alice has a direct causal influence over the outcome $B$ of Bob. We consider the general, finite case where Alice has $m_x$ inputs and $o_a$ outputs, that is, $x=0,\dots,m_x-1$ and $a=0,\dots,o_a-1$ (and analogously for Bob). Variations of this signalling model can be easily constructed and will be briefly discussed at the end of this section.

The signalling model in \figref{fig:models}c requires a hidden variable $\lambda$ assuming $n=o_{a}^{m_x} o_{b}^{m_x m_y}$ possible values.
The causal structure assures $a = f_A (x, \lambda)$ which resembles the LHV model (\figref{fig:models}a).
This is not the case for $b$, which can depend on $x$,$y$ and $\lambda$ -- i.e. $b=f_B(x,y,\lambda)$.
Consequently there are $o_{a}^{m_x}$ possible deterministic functions $f_A$ and $o_{b}^{m_x m_y}$ possible deterministic functions $f_B$.
In turn, we can split up the hidden variable into $\lambda=(\lambda_a,\lambda_b)=(\alpha_0,\dots,\alpha_{m_x-1},\beta_{0,0},\beta_{0,1},\dots,\beta_{m_x-1,m_y-1})$ where $\alpha_x=0,\dots,o_a-1 $ determines the value of $a$ given $x$. Similarly, $\beta_{x,y}= 0,\dots, o_b -1 $ specifies the value of $b$ given $x$ and $y$.
Following \eqref{markov_decom} the observed distribution can be decomposed in the following way:
\begin{equation}
p(a,b\vert x,y)= \sum_{\lambda} p(a\vert x,\lambda) p(b\vert x,y,\lambda)p(\lambda).
\end{equation}

Given such a signalling model and some observed constraints, our task is to find the minimum value of $\mathcal{C}_{X \rightarrow B}$.
Similarly to \eqref{meas_causal}, this quantity can be defined as
\begin{equation}
\label{meas_causal_app}
\mathcal{C}_{X \rightarrow B}= \sup_{b,y,x,x^{\prime}} \sum_{\lambda}  p(\lambda)\vert p(b\vert do(x), y, \lambda)-p(b\vert do(x^{\prime}), y,\lambda)\vert,
\end{equation}
which quantifies the amount of signalling required to explain the observation. Moving on, we note that
\begin{eqnarray}
\nonumber
& &\sum_{\lambda}  p(\lambda)\vert p(b \vert do(x), y, \lambda)-p(b\vert do(x^{\prime}),y,\lambda)\vert \\ \label{eq:aux1}
& &= \sum_{\lambda}  p(\lambda) \vert \delta_{b,f_B(x,y,\lambda)} - \delta_{b,f_B(x^{\prime},y,\lambda)}\vert \\ \nonumber
& &=\sum_{i} q_i v_i= \langle \v, \q \rangle,
\end{eqnarray}
where we have identified $p(\lambda)$ with the $n$-dimensional vector $ {\bf q}$ via $\langle \e_i, \q \rangle = p (\lambda_i)$. The vector ${\bf v}={\bf v}(x,x^{\prime},y,b)$ only consists of $1$'s and $0$'s
and fully characterizes the action of the Kronecker-symbols in (\ref{eq:aux1}).
By doing so, the measure of causal influence \eqref{meas_causal_app} can be recast as
\begin{equation}
\label{Msign_app}
\mathcal{C}_{X \rightarrow B} = \max_{i=1,\dots,L} \langle {\bf q} , {\bf v}_i \rangle = \| C {\bf q} \|_{\infty}.
\end{equation}
Here, the index $i$ parametrizes one of the $L$ possible instances of $(x,x^{\prime},y,b)$ with $x \neq x^{\prime}$ and $\v_i = v(x,x',y,b)$ denotes the vector corresponding to that instance.
The last equality in (\ref{Msign_app}) then follows from introducing $C := \sum_{i=1}^L | \e_i \rangle \langle \v_i |$ and the definition of the $\ell_\infty$-norm.
Consequently, minimizing $\mathcal{C}_{X \to B}$ over all hidden variables that are compatible with our observations is equivalent to solving
\begin{eqnarray}
\label{eq:CLP1}
\underset{ {\bf q} \in \RR^{n}}{\minimize} & & \quad \| C {\bf q} \|_{\infty} \label{min_sign}  \\ \nonumber
\st & & \quad V T{\bf q} = V \p  \\
& & \quad \langle \1_n, {\bf q} \rangle = 1 	\label{eq:CLP2} \\ \nonumber
& & \quad {\bf q}  \geq \0_n.
\end{eqnarray}
Corollary \ref{cor:3} assures that this optimization problem can be translated into a LP in standard form.
As already mentioned in the main text, $V \p$ denotes the vector representing the correlations under consideration -- the probability distribution itself ($V = \Id$) or a function of it, e.g., a Bell inequality ( $V = | \e_1 \rangle \langle \b |$ for some $b \in \RR^m$) --  and the matrix $VT$ maps the underlying hidden variable states to the actually observed vector $V \p$ .

Given any observed distribution $V \p$ of interest, one can easily implement this linear program and solve it efficiently.
However, we are also interested in deriving an analytical solution which is valid for any vector $\p$ encoding the full probability distribution $p(a,b \vert x,y)$.
Subjecting to the full probability distribution $\p$ in particular guarantees that the normalization constraint \eqref{eq:CLP2}
is already assured by $T \q = \p$. This allows for dropping this constraint without loss of generality.
Proposition \ref{prop_dual} serves precisely the purpose of obtaining such an analytical expression, as it --  in combination with Corollary \ref{cor:3} --  assures that solving (\ref{eq:CLP1}) is equivalent to evaluating
\begin{equation*}
\max_{1 \leq i \leq K} \langle \d_i, V \p \rangle,
\end{equation*}
where $\left\{ {\bf d}_i \right\}_{i=1}^K$ denotes the vertices of the dual feasible region in \eqref{eq:cor3}.
Standard algorithms like PORTA \cite{porta} allow for evaluating these extremal points.
We have performed such an analysis for the particular case of the CHSH scenario ($m_x=m_y=o_a=o_b=2$).
We list all the $13$ vertices of the LP's dual feasible region in Table \ref{extremal_points_CHSH}.
Nicely, we see that all the extremal points can be divided into three types:
 i) the trivial vector $\0_m$, ii) the symmetries of the CHSH inequality vector, for example
\begin{equation}
\label{CHSH_app}
p^{AB}_{00\vert 00} +p^{AB}_{00\vert 01} +p^{AB}_{00\vert 10} -p^{AB}_{00\vert 11} -p^{A}_{0\vert 0} -p^{B}_{0\vert 0}
\end{equation}
and iii) the non-signalling conditions, for instance
\begin{equation}
\label{NS_app}
-p^{AB}_{01\vert 00} - p^{AB}_{11\vert 00} +  p^{AB}_{01\vert 10} + p^{AB}_{11\vert 10}.
\end{equation}
Here, we have used the short hand notation $p^{AB}_{ab\vert xy}=p(a,b \vert x,y)$ and similarly for the marginals.

For any non-signalling distribution, the conditions of the third type vanish and the corresponding vertices need not be considered.
Therefore we arrive at the result stated in the main text, namely
\begin{equation*}
\min \mathcal{C}_{X \rightarrow B}= \max \left[ 0,\mathrm{CHSH} \right],
\end{equation*}
where the maximum is taken over all the eight symmetries of the $\mathrm{CHSH}$ inequality.

\begin{table*}
\begin{tabular}{|c| c c c c c c c c c c c c c c c c|} \hline
\multicolumn{17}{|c|}{List of extremal points}\\
\hline
\textbf{\#}
&$p^{00}_{00}$&$p^{01}_{00}$&$p^{00}_{10}$&$p^{00}_{11}$ &$p^{01}_{00}$&$p^{01}_{01}$&$p^{01}_{10}$&$p^{01}_{11}$ &$p^{10}_{00}$&$p^{10}_{01}$&$p^{10}_{10}$&$p^{10}_{11}$ &$p^{11}_{00}$&$p^{11}_{01}$&$p^{11}_{10}$&$p^{11}_{11}$
\\

\hline
1
&0&0&0&0&0&0&0&0&0&0&0&0&0&0&0&0
\\

\hline
2
&0 &-1/2 &0 &1/2 &0 &-1/2 &-1 &-1/2 &0 &-1/2 &0 &1/2 &0 &1/2 &0 &-1/2
\\

\hline
3
&0 &-1/2 &0  &1/2 &0 &-1/2 &-1 &-1/2 &0  &1/2 &0 &-1/2  &0 &-1/2  &0  &1/2
\\

\hline
4
&0 &-1/2 &0  &1/2 &0  &1/2  &0 &-1/2 &0 &-1/2 &0  &1/2  &0 &-1/2 &-1 &-1/2
\\

\hline
5
&0 &-1/2 &0  &1/2 &0  &1/2  &0 &-1/2 &0  &1/2 &0 &-1/2 &-1 &-1/2  &0 &-1/2
\\

\hline
6
&0  &1/2 &0 &-1/2 &0 &-1/2  &0  &1/2 &0 &-1/2 &0  &1/2  &0 &-1/2 -&1 &-1/2
\\

\hline
7
&0  &1/2 &0 &-1/2 &0 &-1/2  &0  &1/2 &0  &1/2 &0 &-1/2 &-1 &-1/2  &0 &-1/2
\\

\hline
8
&0  &1/2 &0 &-1/2 &0  &1/2  &1  &1/2 &0 &-1/2 &0  &1/2 &-1 &-1/2 &-1 &-3/2
\\

\hline
9
&0  &1/2 &0 &-1/2 &0  &1/2  &1  &1/2 &0  &1/2 &0 &-1/2 &-1 &-3/2 &-1 &-1/2
\\

\hline
10
&0   &-1 &0   &-1 &0    &0  &0    &0 &0    &1 &0    &1  &0    &0  &0    &0
\\

\hline
11
&0    &0 &0    &0 &0   &-1  &0   &-1 &0    &0 &0    &0  &0    &1  &0    &1
\\

\hline
12
&0    &0 &0    &0 &0    &1  &0    &1 &0    &0 &0    &0  &0   &-1  &0   &-1
\\

\hline
13
&0    &1 &0    &1 &0    &0  &0    &0 &0   &-1 &0   &-1  &0    &0  &0    &0
\\

\hline
\end{tabular}
\caption{Extremal points for the feasible region in the dual problem \eqref{eq:cor3} associated with the CHSH scenario. In the notation above, $p^{xy}_{ab}$ corresponds to $p(a,b \vert x,y)$. The extremal points 2-9 can be easily seen to correspond to the symmetries of the CHSH inequality. Take for instance point $2$ which can be written as the CHSH operator in \eqref{CHSH_app}. The extremal points 10-13 correspond to the non-signalling conditions. For instance, point $10$ corresponds to \eqref{NS_app} and is zero for any non-signalling distribution.
} \label{extremal_points_CHSH}
\end{table*}

Having such a causal interpretation of the CHSH inequality at hand, one can wonder the same holds true for other Bell inequalities, for instance the $(I_{3322} \leq 0)$-inequality \cite{Collins2004} (three inputs for Alice and Bob with two outcomes each).
Dwelling on the model in \figref{fig:models}c we show that the $I_{3322}$ inequality only provides a lower bound to the actual value of $\mathcal{C}_{X \rightarrow B}$
required to simulate a given nonlocal distribution. This is illustrated in Fig.\;\ref{fig:I3322plot}.
To be more concrete, we consider the particular full probability distribution
\begin{equation}
p(a,b \vert x,y)= v p_{\text{PR}}+(1-v) p_{\text{W}},
\label{eq:I3322_distribution}
\end{equation}
where
\begin{equation*}
p_{\text{PR}}\left(  a,b | x,y \right)  =\left\{
\begin{array}{ll}
1/2 & \text{if } a+ b=1\mod 2 \text{, } x+y=3,   \\
1/2 & \text{if } a+ b=0 \mod 2 \text{, } x+y \neq 3,   \\
0 & \text{otherwise,}%
\end{array}
\right.
\end{equation*}
denotes the generalization of the PR box maximally violating the $I_{3322}$-inequality (achieving $I_{3322} = 1$) and
\begin{equation*}
p_{\text{W}}\left(  a,b | x,y \right)  =1/4
\end{equation*}
denotes the uniform distribution (achieving $I_{3322} = -1$).
Such a full probability distribution results in $I_{3322}=2v-1$.
We numerically see that
\begin{equation*}
\mathcal{C}_{X \rightarrow B}=\max \left[0,(2v-1)/2\right]=\max \left[0,I_{3322}/2\right]
\end{equation*}
holds, if we take into account the full probability distribution. However, if we instead only impose a fixed value of the $I_{3322}$-inequality (plus nonsignalling and normalization constraints) we
numerically (see Fig.\;\ref{fig:I3322plot}) arrive at
\begin{equation*}
\min \mathcal{C}_{X \rightarrow B}
=
\left\{
\begin{array}{ll}
0 & \text{for } I_{3322} \leq 0, \\
(2/5)*I_{3322} & \text{for } 0 \leq I_{3322} \leq 0.714,   \\
(1/4)*(3I_{3322}-1)& \text{for } 0.714 \leq I_{3322} \leq 1.
\end{array}
\right.
\end{equation*}
This shows that different distributions achieving the same value for $I_{3322}$ may have quite different requirements in order to be simulated. Moreover, this result highlights another nice aspect of our framework. Unlike the results in \cite{Hall2010a,Hall2010b,Hall2011,Manik2013,Rai2012,Koh2012,Biswajit2013}, it can take into account the full probability distribution, not just the value of a specific Bell inequality. This allows for a much more accurate description.

\begin{figure}[!t]
\includegraphics[width=0.9\columnwidth]{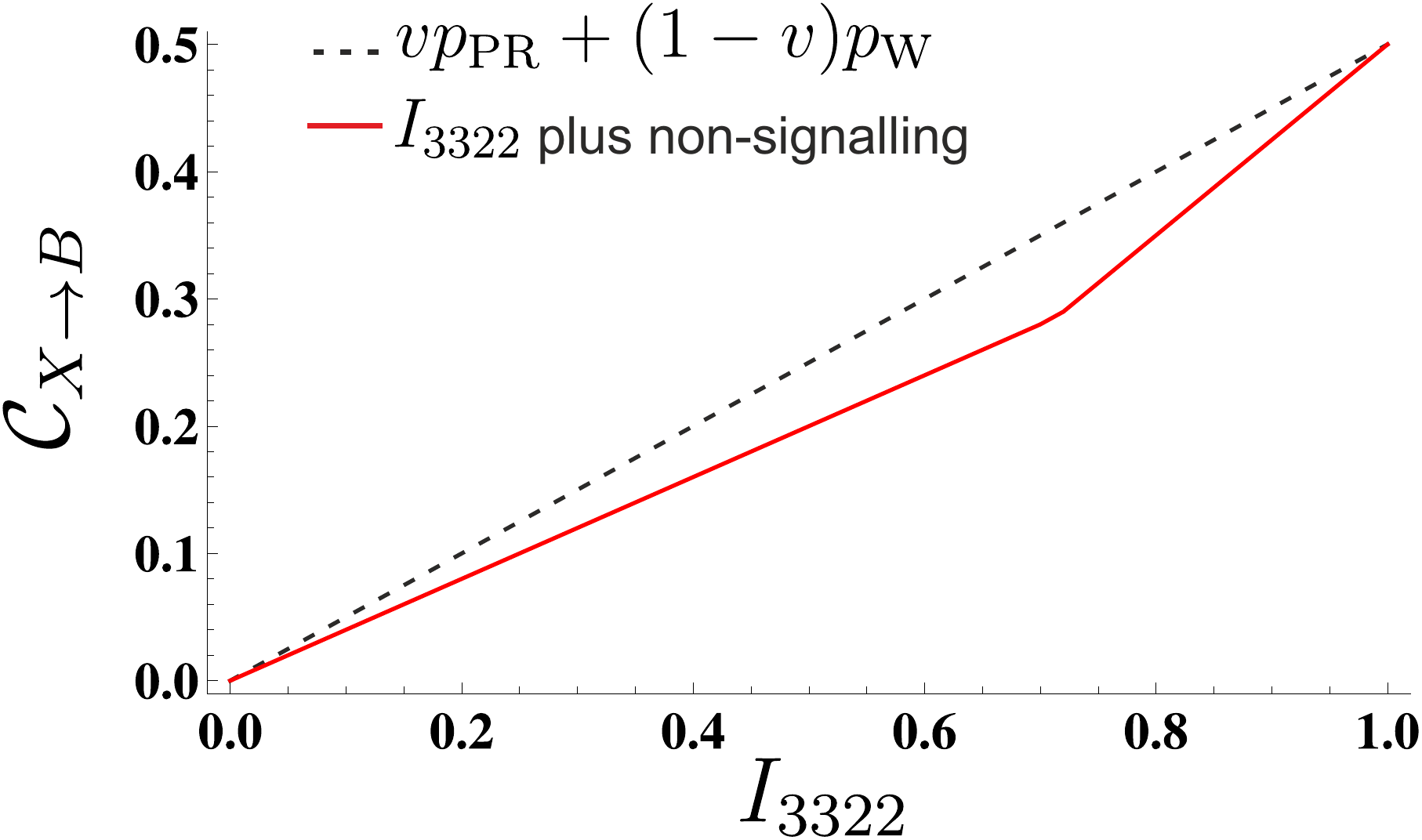}
\caption{The value of $\min \mathcal{C}_{X \rightarrow B}$ as function of the $I_{3322}$ value. The black curve represents the case where the full probability distribution defined in \eqref{eq:I3322_distribution} is taken into account. The red curve is obtained by minimizing $\mathcal{C}_{X \rightarrow B}$ for a given value of $I_{3322}$ subject to non-signalling and normalization constraints.
}
\label{fig:I3322plot}
\end{figure}

An almost identical analysis can be done for the model displayed in \figref{fig:models}b. Using \eqref{markov_decom}, the observed distribution can be decomposed as:
\begin{equation}
\label{modelAB}
p(a,b\vert x,y)= \sum_{\lambda} p(a\vert x,\lambda)  p(b\vert a,y,\lambda).
\end{equation}
Using the measure of direct causal influence \eqref{meas_causal} for $\mathcal{C}_{A \rightarrow B}$, revisiting the CHSH scenario, we can once more conclude
\begin{equation}
\label{CAB_CHSH_app}
\min \mathcal{C}_{A \rightarrow B}= \max \left[ 0,\mathrm{CHSH} \right].
\end{equation}
In particular, this implies that such a model -- where one of the parties communicates its outcomes -- is capable of simulating any nonlocal distributions in the CHSH scenario.

Interestingly, things change drastically if we move on to the $I_{3322}$ scenario.
It is worthwhile to point out that model \eqref{modelAB} restricts the hidden variables to a region characterized by finitely many inequalities.
Therefore, analogously to the usual LHV model \eqref{LHV}, the feasible region is a polytope.
Using the software PORTA we found different classes of non-trivial inequalities that define the compatibility region of this model.
As shown in the main text -- equation (\ref{eq:I3322E}) -- one of these inequalities corresponds to
\begin{equation*}
I_{A\rightarrow B}=\langle E_{00} \rangle  - \langle E_{02} \rangle
- \langle E_{11} \rangle + \langle E_{12} \rangle
-\langle E_{20} \rangle + \langle E_{21} \rangle  \leq 4,
\end{equation*}
where $E_{xy}=\langle A_xB_y \rangle = \sum_{a,b} (-1)^{a+b} p(a,b\vert x,y)$.
We now show that this inequality can be violated by any quantum state $\ket{\psi}= \sqrt {\epsilon} \ket{00}+ \sqrt { (1-\epsilon)}\ket{11}$ with $\epsilon \neq 0,1$.
To arrive at such a statement, it suffices to consider that Alice and Bob perform projective measurements on the X-Z plane of the Bloch sphere.
More concretely, Alice measures observables of the form $O^{A}_{x}= \cos (\theta^{A}_{x}) Z + \sin (\theta^{A}_{x}) X$
and so does Bob whose observables we denote by $O^{B}_{x}$.
Here, $X$ and $Z$ refer to the Pauli matrices. For such particular measurements, the correlators $E_{xy}=\langle A_xB_y \rangle$ simply correspond to
\begin{equation*}
E_{xy}=\cos (\theta^{A}_{x}) \cos (\theta^{B}_{y}) +
 2 \sqrt {\epsilon(1- \epsilon)} \sin (\theta^{A}_{x}) \sin (\theta^{B}_{y}).
\end{equation*}
Choosing the angles such that $\theta^{A}_{0}=0$, $\theta^{A}_{1}=\pi$, $\theta^{A}_{2}=\pi/2$, $\theta^{B}_{0}=0$ and $\theta^{B}_{2}=-\pi$ we obtain
\begin{equation*}
I_{A\rightarrow B}= 3 + \cos (\theta^{B}_{1}) + 2 \sqrt {\epsilon(1- \epsilon)} \sin (\theta^{B}_{1}).
\end{equation*}
This expression exceeds $4$ for any $\epsilon \neq 0,1$, provided that we choose $\theta^{B}_{1}$ sufficiently small compared to $2 \sqrt {\epsilon(1- \epsilon)}$. This result shows that even relaxing some of assumptions in Bell's theorem -- in this particular case, the fact that Alice outcomes cannot have a direct causal influence over Bob outcomes -- may not be enough to causally explain quantum correlations.

A similar analysis can be performed for the communication model of \figref{fig:models}d.
Such a model implies the following decomposition of the distribution observed:
\begin{equation*}
p(a,b\vert x,y)= \sum_{\lambda,m} p(a\vert x,\lambda) p(m \vert x,a,\lambda) p(b\vert m,y,\lambda)p(\lambda).
\end{equation*}
Such an expression suggests to decompose the hidden variable into $\lambda=(\lambda_{\alpha},\lambda_{\beta},\lambda_{m})$.
By doing so, one can perform an analysis similar to the one above and define a measure of causal influence similar to \eqref{meas_causal}. However, inspired by the communication model of Toner and Bacon \cite{Toner2003}, we directly proceed to analyzing the amount of communication between Alice and Bob required to classically reproduce the distribution observed.
We quantify the information content of a binary message $m$ sent from Alice to Bob via its Shannon entropy $H (m)$.
Due to the highly non-linear character of entropies, the optimizations involving $H(m)$ are quite hard in general.
Fortunately in the particular case of binary messages, minimizing $H (m)$ is equivalent to minimizing
\begin{eqnarray}
p(m=0) & & =\sum_{a,x,\lambda} p(m=0 \vert x, a, \lambda) p(a \vert x, \lambda) p(x) p(\lambda) \\ \nonumber
& & =(1/m_x) \sum_{a,\lambda} p(m=0 \vert x, a, \lambda) p(a \vert x, \lambda)  p(\lambda) \\ \nonumber
& & = \langle \v,\q\rangle.
\end{eqnarray}
Here, we have once more identified $p(\lambda)$ with the vector $\q$ and the components of $\v$ correspond to $v_i = \sum_a p (m=0|x,a,\lambda_i) p (a| x, \lambda_i)$.
Also, we have without loss of generality considered a uniform distribution of Alice's inputs -- i.e. $p (x) = 1/m_x$ -- in the second line.
Consequently, the constrained minimization of $p(m=0)$ (and thus $H(m)$) simply corresponds to
\begin{eqnarray*}
\underset{\q \in \RR^n}{\minimize} & & \quad \langle \v, \q \rangle \\
\st & & \quad T \q = \p \\
& & \quad \langle \1_n , \q \rangle = 1 \\
& & \quad \q \geq \0_n,
\end{eqnarray*}
which is clearly a primal LP.
Computing the extremal points of the dual problem allows us to infer a novel relation between the degree of nonlocality and the minimum communication required to simulate it.
Namely, $\min p(m=0)= \max \left[ 0, \mathrm{CHSH}_{\Pi} \right]$ which in turn implies
\begin{equation*}
\min H (m)
=
\left\{
\begin{array}{ll}
h \left( \mathrm{CHSH} \right) & \text{for } \mathrm{CHSH} \geq 0, \\
0& \text{else.}   \\
\end{array}
\right.
\end{equation*}
Here, $h$ denotes the binary entropy given by $h(v)=-v\log_2 v -(1-v) \log_2 (1-v)$.

These results on the relaxation of the locality assumption, in addition to fundamental implications and relevance in nonlocal protocols, can also be used to compute the minimum causal influences/communication required to causally explain the nonlocal correlations observed in experimental realizations of Bell's tests where the space-like separation is not achieved \cite{Rowe2001,Giustina2013}.

\section{Measurement dependence models}

In this section we focus on the measure $\MM_{X,Y:\lambda}$ -- see equation \eqref{meas_corr} in the main text -- which quantifies the degree of measurement dependence in a given causal model. Similar to the previous section, we are going to show that determining the minimal degree of measurement dependence required to reproduce a given non-local distribution can be done via solving a LP.

To illustrate this, we consider the simplest scenario of measurement dependence in detail. Such a model is displayed in \figref{fig:models}e
and involves a bipartite Bell scenario, where the measurement inputs $X$ of Alice and $Y$ of Bob, respectively, can be correlated with the source $\Lambda$ producing the particles to be measured.

Without loss of generality, we model such correlations by introducing an additional hidden variable $\mu$ which serves as a common ancestor for $x$, $y$ and $\lambda$.
This suggests to decompose this common ancestor into $\mu=(\mu_x, \mu_y, \mu_{\lambda})$.
We can assume $x=\mu_x$, $y=\mu_y$ and $\lambda=\mu_{\lambda}$ without loss of generality ($x$, $y$ and $\lambda$ are deterministic functions of their common ancestor $\mu$).
If Alice's apparatus has $m_x$ inputs (i.e. $x = 0,\ldots,m_x-1$) and $o_a$ outputs (i.e. $a = 0,\ldots,o_a-1$),
and similarly for Bob, $n = m_x m_y o_a^{m_x} o_a^{m_y}$ different instances of $\mu$ suffice to fully characterize the common ancestor's influence.
Similar to the previous section, we can use this discrete nature of $\mu$ to identify any probability distribution $p (\mu): \Xi \to [0,1]$ uniquely with a non-negative, real vector $ \q$ via
\begin{equation}
q_i = \langle {\bf e}_i, {\bf q} \rangle = p (\mu_i) \quad i=1,\ldots,n.
\end{equation}
Likewise, we can rewrite the observed probability distribution $p (a,b | x,y )$ as
\begin{eqnarray*}
& & p(a,b\vert x,y) \\
& &= \frac{1}{p(x,y)} \sum_{\mu, \lambda} p(a \vert x, \lambda)p(b \vert y, \lambda) p(x \vert \mu)p(y \vert \mu)p(\lambda \vert \mu) p(\mu) \\ \nonumber
& &= \frac{1}{p(x,y)} \sum_{\mu_{\lambda}} p(a \vert x,\mu_{\lambda}) p(b \vert y , \mu_{\lambda}) p(\mu_{\lambda})  \\ \nonumber
& & = \langle {\bf v} (x,y,a,b,\lambda), {\bf q} \rangle.
\end{eqnarray*}
The usefulness of such vectorial identifications becomes apparent when taking a closer look at the measure of correlation \eqref{meas_corr}.
Indeed,
\begin{eqnarray}
\label{meas_corr_app}
\mathcal{M} & & = \sum_{x,y,\lambda} \vert p(x, y,\lambda)-p(x,y)p(\lambda) \vert \\ \nonumber
& &= \sum_{x,y,\lambda} \vert \sum_{\mu} \delta_{\lambda,\mu_{\lambda}}(\delta_{x,\mu_x}\delta_{y,\mu_y}-p(x,y))p(\mu) \vert \\ \nonumber
& &= \sum_{x,y,\lambda} \vert \langle {\bf v}(x,y,\lambda), {\bf q} \rangle \vert \\ \nonumber
& &=  \| M {\bf q} \|_{\ell_1},
\end{eqnarray}
where $M $ denotes  the real $k \times n $ matrix $M = \sum_{j=1}^{k} |{\bf e}_j \rangle \langle  {\bf v}(x,y,\lambda)|$.
Note that this matrix implicitly depends on $p(x,y)$. However, $p (x,y)$ is an observable quantity and thus available.
Moreover, one is typically interested in the case, where said distribution for the inputs is uniformly distributed -- i.e. $p(x,y)=1/(m_x m_y)$.

It is worthwhile to point out that different measures of measurement dependence have been considered in the literature. For instance, in Ref. \cite{Hall2010b} the following measure of correlation has been proposed:
\begin{equation*}
\mathcal{M}_{\text{Hall}}  =  \sup_{x,x^{\prime},y,y^{\prime}}\sum_{y} \vert p(\lambda\vert x,y)-p(\lambda\vert x^{\prime},y^{\prime}) \vert.
\end{equation*}
Similarly to \eqref{meas_corr_app}, we can rewrite this measure as a $\ell_1$-norm, namely
\begin{equation*}
\mathcal{M}_{\text{Hall}}  =  \max_{i=1,\ldots,L} \| M_i {\bf q} \|_{\ell_1}.
\end{equation*}
The constrained minimization of both $\mathcal{M}$ and $\mathcal{M}_{\text{Hall}}$ consequently corresponds to the following optimization:
\begin{eqnarray}
\underset{{\bf q} \in \RR^n}{\minimize} & & \quad \max_{i=1,\ldots,L} \| M_i {\bf q} \|_{\ell_1} 	\label{eq:convex_min_max_problem_app} \\
\st & & \quad V {\bf q} =  {\bf \tilde{p}} \nonumber \\
& & \quad \langle \1_n, {\bf q} \rangle = 1 \nonumber	\\
& & \quad {\bf q} \geq \0_n,\nonumber
\end{eqnarray}
Theorem \ref{thm:1} assures that such an optimization can be recast as a primal LP in standard form.

In this work we have opted to focus on the measure defined in \eqref{meas_corr_app}. The reason for that is two-fold.
Firstly, such a choice assures $L=1$ and numerically solving the corresponding  LP is substantially faster.
The second reason stems from the fact that \eqref{meas_corr_app} is proportional to the variational distance between the distributions $p(x,y,\lambda)$ and $p(x,y)p(\lambda)$.
Knowledge of the total variational distance allows to lower-bound the mutual information between $(X,Y)$ and $\Lambda$ via the Pinsker inequality  \cite{Fedotov2003,Hall2013}:
\begin{equation*}
I(X,Y:\Lambda) \geq  \mathcal{M}^2 \log_2 \mathrm{e}.
\end{equation*}
A converse bound on $I (X,Y:\Lambda)$ is obtained by noting that the (linear program) solution to the minimization of $\mathcal{M}$ returns a specific hidden variable model, for which we can readily compute the mutual information.

Using measure \eqref{meas_corr_app}, we have considered many different Bell scenarios.
This was already mentioned in the main text.
In particular we refer to Fig.\;\ref{fig:CGLMPplot} where we consider the CGLMP scenario \cite{Collins2002} -- a bipartite model, where Alice and Bob measure one out of two observables each of them having $d$ possible outcomes.
The corresponding CGLMP inequality is of the form $I_{d} \leq 2$, where the local bound of $2$ and the maximal violation of $4$ are independent of the number of possible outcomes $d$.
Imposing the value of the $I_d$ inequality ad imposing non-signalling and the normalization constraints we numerically obtain
a very simple relation up to $d = 8$, namely
\begin{equation*}
\min \mathcal{M}=\max \left[ 0, (I_d-2)/4 \right].
\end{equation*}
Conversely, we have also considered specific quantum realizations. For $d=2,5,7$ we have numerically optimized over quantum states and projective measurements maximizing the corresponding $I_d$ inequality.
With the resulting quantum probability distribution at hand, we computed $\mathcal{M}$ and inferred lower and upper bounds for $I (X,Y: \Lambda)$ in turn.
These results are depicted in  Fig.\;\ref{fig:CGLMPplot} and we refer to the corresponding section in the main text for further insights concerning measurement dependence.

\begin{figure}[!t]
\includegraphics[width=0.9\columnwidth]{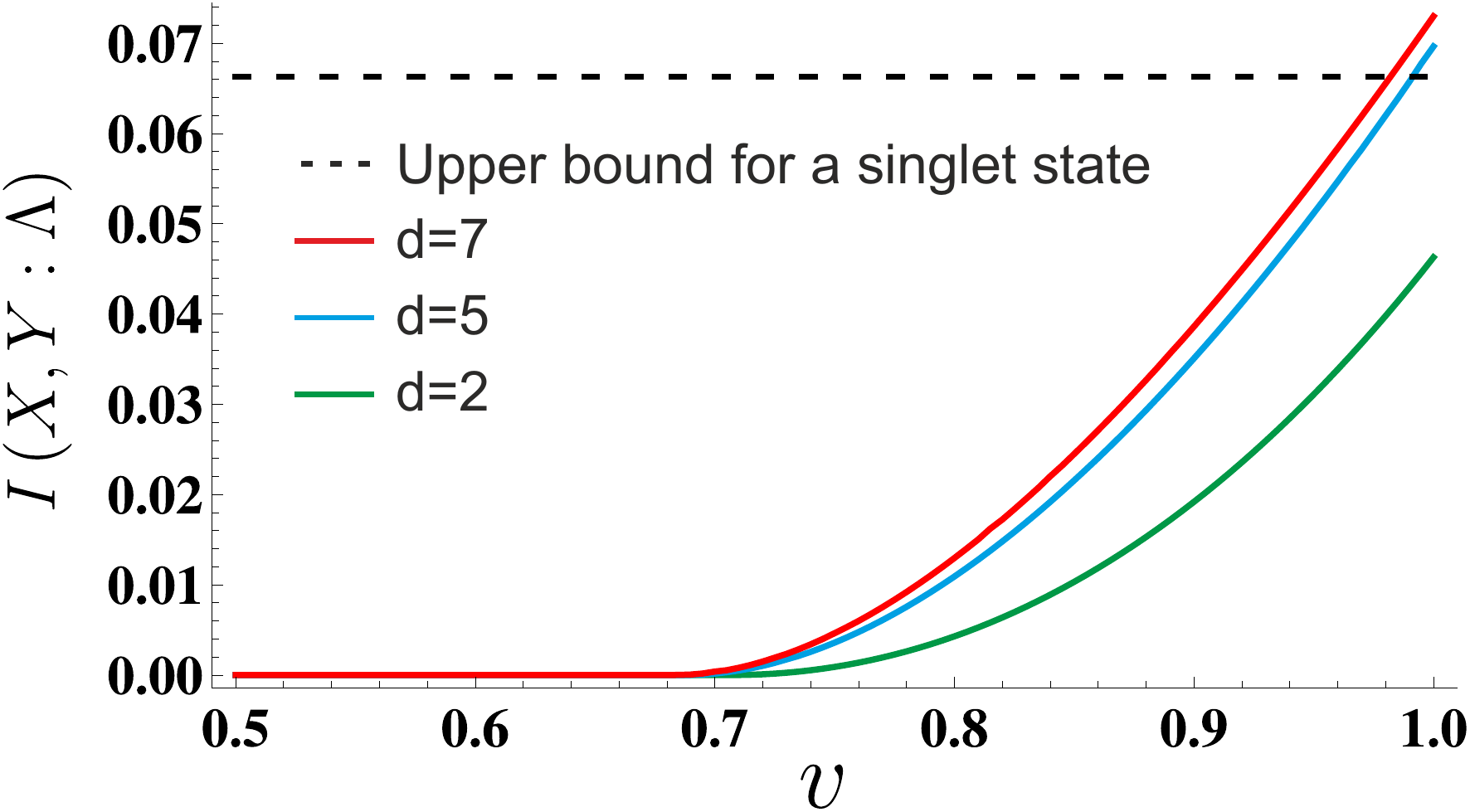}
\caption{Upper bound for $I(X,Y:\lambda)$ computed as a function of the visibility $V$ for $d=2,5,7$ (green, blue and red curves, respectively). The black dashed curve correspond to the upper bound  $I(X,Y:\Lambda) \approx 0.0663$ obtained in \cite{Hall2011} for singlet states. The solid curves correspond to $v p^{Q}_{max} +(1-v)p_{\text{W}}$ were $p^{Q}_{max}$ was obtained by maximizing the quantum violation of $I_{d}$ over pure states and projective measurements.}
\label{fig:CGLMPplot}
\end{figure}

\section{Bilocality scenario}

In LHV models for multipartite Bell scenarios, it is usually assumed that the same hidden variable is shared among all the parties. That is, a Bell inequality violation rules out any shared LHV. However, in quantum information protocols it is often the case that different parties receive particles produced by independent sources, e.g.~in quantum networks \cite{Cavalcanti2011quantum,Fritz2012,Chaves2014,Chaves2014b,Chaves2014information}. It is then natural to focus on LHV models which reproduce the independence structure of the sources. That is, each hidden variable can only be shared between parties receiving particles from the same source. Such models are weaker than general LHV models, i.e.~they form a subset of all the models where the hidden variables can be shared arbitrarily among the parties.

A particular case is an entanglement swapping scenario \cite{Zukowski1993} involving three parties $A$, $B$ and $C$ which receive entangled states from two independent sources. The DAG of \figref{fig:models}f shows an LHV model with independent variables for this scenario. The assumption that the sources are independent, $p(\lambda_1,\lambda_2)=p(\lambda_1)p(\lambda_2)$, is known as bilocality \cite{Branciard2010,Branciard2012}. With this assumption, in analogy with the usual LHV decomposition \eqref{LHV}, the correlations for this scenario must fulfil
\begin{eqnarray}
\label{eq.bilocLHV}
p(a,b,c\vert x,z)=\sum_{\lambda_1,\lambda_2} & & p(\lambda_1)p(\lambda_2) \\ \nonumber
& & p(a\vert x,\lambda_1)p(b\vert\lambda_1,\lambda_2) p(c\vert z,\lambda_2) .
\end{eqnarray}
Note that the set of bilocal correlations is non-convex because of the nonlinearity of the bilocality assumption. This makes the set extremely difficult characterize~\cite{Branciard2010,Branciard2012,Fritz2012,FritzChaves2013,Chaves2012,Chaves2013a}. In the following, we introduce a measure of relaxation of bilocality, and we show that, despite the non-convex nature of the measure, it can nevertheless be computed by means of a linear program.

For fixed numbers $m_x$, $m_z$ and $o_a$, $o_b$, $o_c$ of the input $x$, $z$ and output $a$, $b$, $c$ values, there is a finite number $n=o_a^{m_x} o_b o_c^{m_z}$ of deterministic strategies. We can label the deterministic strategies for $a$ by symbols $\bar{\alpha}=\alpha_0,\ldots,\alpha_{m_x}$ where $\alpha_x$ is the value of $a$ when the input is $x$. Similarly, we label the functions for $b$ by $\beta$ and for $c$ by $\bar{\gamma} = \gamma_0,\ldots,\gamma_{m_z}$. Thus, the distribution over the deterministic strategies can be identified with an $n$-dimensional vector ${\bf q}$, analogous to the case in the main text for usual LHV models. The vector ${\bf q}$ then has components $q_{\bar{\alpha},\beta,\bar{\gamma}}$.
Defining the marginals
\begin{equation}
\begin{split}
q^{ac}_{\bar{\alpha},\bar{\gamma}} & = \sum_\beta q_{\bar{\alpha},\beta,\bar{\gamma}} \\
q^{a}_{\bar{\alpha}} = \sum_{\beta,\bar{\gamma}} q_{\bar{\alpha},\beta,\bar{\gamma}} , & \hspace{0.5cm} q^{c}_{\bar{\gamma}} = \sum_{\beta,\bar{\alpha}} q_{\bar{\alpha},\beta,\bar{\gamma}} ,
\end{split}
\end{equation}
the bilocality assumption is equivalent to the requirement
\begin{equation}
q^{ac}_{\bar{\alpha},\bar{\gamma}} = q^a_{\bar{\alpha}}q^c_{\bar{\gamma}} .
\end{equation}
In analogy with the measure \eqref{meas_corr} of measurement dependence, the degree of non-bilocality can be measured by how much the distribution over the LHVs fail to comply with this criterion. We define the measure of non-bilocality as
\begin{equation}
\mathcal{M}_{\text{BL}}= \sum_{\bar{\alpha},\bar{\gamma}} \vert q^{ac}_{\bar{\alpha},\bar{\gamma}} - q^{a}_{\bar{\alpha}}q^{c}_{\bar{\gamma}} \vert.
\end{equation}
Clearly $\mathcal{M}_{\text{BL}}=0$ if and only if the bilocality constraint is fulfilled.

The non-bilocality measure is quadratic in the distribution over the the deterministic strategies. Thus, it is not obvious that linear programming will be helpful in computing $\mathcal{M}_{\text{BL}}$ or that the computation can be made efficient. However, we notice that, for given observed correlations, there are restrictions on the marginals $q^a_{\bar{\alpha}}$ and $q^c_{\bar{\gamma}}$ imposed by the observed distribution $p(a,b,c|x,z)$ because of the constraint \eqref{eq.bilocLHV} that the LHV must reproduce the observations. This constraint can be written
\begin{equation}
\label{eq.bilocLHVq}
p(a,b,c|x,z) = \sum_{\bar{\alpha},\beta,\bar{\gamma}} \delta_{a,\alpha_x}\delta_{b,\beta}\delta_{c,\gamma_z} q_{\bar{\alpha},\beta,\bar{\gamma}} .
\end{equation}
Depending on the observed distribution, there may be no or just a few free parameters $\nu$ which determine $q^{a}_{\bar{\alpha}} = f_{\bar{\alpha}}(\nu)$. We can then rewrite $\mathcal{M}_{\text{BL}}$ as
\begin{equation}
\mathcal{M}_{\text{BL}}(\nu) = \sum_{\bar{\alpha},\bar{\gamma}} \vert q^{ac}_{\bar{\alpha},\bar{\gamma}} - f_{\bar{\alpha}}(\nu) q^{c}_{\bar{\gamma}} \vert .
\end{equation}
For fixed $\nu$ the measure $\mathcal{M}_{\text{BL}}(\nu)$ is linear and its minimum can be found via a linear program, as we now show.

As previously, the first step is to write $\mathcal{M}_{\text{BL}}(\nu)$ as an $\ell_1$-norm. For a given value of $\nu$, we can write
\begin{align}
\mathcal{M}_{\text{BL}}(\nu) & = \sum_{\bar{\alpha},\bar{\gamma}} \vert \sum_{\beta} q_{\bar{\alpha},\beta,\bar{\gamma}} - f_{\bar{\alpha}}(\nu) \sum_{\bar{\alpha}',\beta} q_{\bar{\alpha}',\beta,\bar{\gamma}} \vert \\
& = \sum_{\bar{\alpha},\bar{\gamma}} \vert \sum_{\bar{\alpha}'\beta'\bar{\gamma}'} M^\nu_{\bar{\alpha}\bar{\gamma},\bar{\alpha}'\beta'\bar{\gamma}'} q_{\bar{\alpha}'\beta'\bar{\gamma}'} \vert  \\
& =  \| M^\nu \mathbf{q} \|_{\ell_1} ,
\end{align}
where $M^\nu$ is a matrix of dimension $l \times n$, with $l=o_a^{m_x} o_c^{m_z}$ and entries $M^\nu_{\bar{\alpha}\bar{\gamma},\bar{\alpha}'\beta'\bar{\gamma}'} = \delta_{\bar{\alpha},\bar{\alpha}'} \delta_{\bar{\gamma},\bar{\gamma}'} - f_{\bar{\alpha}}(\nu) \delta_{\bar{\gamma},\bar{\gamma}'}$ (where $\delta_{\bar{\alpha},\bar{\alpha}'} = \delta_{\alpha_0,\alpha_0'}\cdots \delta_{\alpha_{o_x},\alpha_{o_x}'}$ etc.). Minimisation of $\mathcal{M}_{\text{BL}}(\nu)$ for given, observed correlations $p(a,b,c|x,z)$ is then equivalent to
\begin{equation}
\label{eq.bilocL1min}
\begin{split}
\underset{\mathbf{q}\in\mathbb{R}^n}{\minimize} \; & \| M^\nu \mathbf{q} \|_1 		\\
\st & A \mathbf{q} = \mathbf{p} 			\\
    &  \langle \1_n,  \mathbf{q} \rangle = 1 	\\
   &  \mathbf{q} \geq \mathbf{0}_n ,
\end{split}
\end{equation}
where $\mathbf{p}$ is the $k$-dimensionsal vector representing the observed correlations, with $k=o_ao_bo_cm_xm_z$, and $A$ is a $k \times n$ matrix which encodes the constraint \eqref{eq.bilocLHVq} that the LHV must reproduce the observations. The entries of $A$ are $A_{abcxz,\bar{\alpha}\beta\bar{\gamma}} = \delta_{a,\alpha_x}\delta_{b,\beta}\delta_{c,\gamma_z}$. From Theorem~\ref{thm:1}, the minimisation \eqref{eq.bilocL1min} is equivalent to the linear program
\begin{equation}
\begin{split}
\underset{\mathbf{t} \in \mathbb{R}^l}{\minimize} \,\, & \langle \1_l, \mathbf{t} \rangle 		\\
\st & - \mathbf{t} \leq M^\nu \mathbf{q} \leq \mathbf{t} , 		\\
 & A\mathbf{q} = \mathbf{p}, 	\\
 & \langle \1_n, \mathbf{q} \rangle = 1, 		\\
 & \mathbf{q} \geq \mathbf{0}_n
\end{split}
\end{equation}
Thus, minimising $\mathcal{M}_{\text{BL}}(\nu)$  for fixed $\nu$ is indeed a linear program. To find the minimum of the measure $\mathcal{M}_{\text{BL}}$ we must minimise also over $\nu$ and hence we have an optimisation over a linear program. In order to verify non-bilocality of a given distribution we need to check that the minimum over $\nu$ is non-zero, or equivalently that the minimum of $\mathcal{M}_{\text{BL}}(\nu)$ is non-zero for all values of $\nu$ in the allowed range. On the other hand, if we find a value of $\nu$ such that $\mathcal{M}_{\text{BL}}(\nu) = 0$ this is sufficient to show that the distribution is bilocal (and as a by-product we get an explicit bilocal decomposition).

\subsection{Bilocality with binary inputs}

\begin{figure}[!t]
\includegraphics[width=1.0\columnwidth]{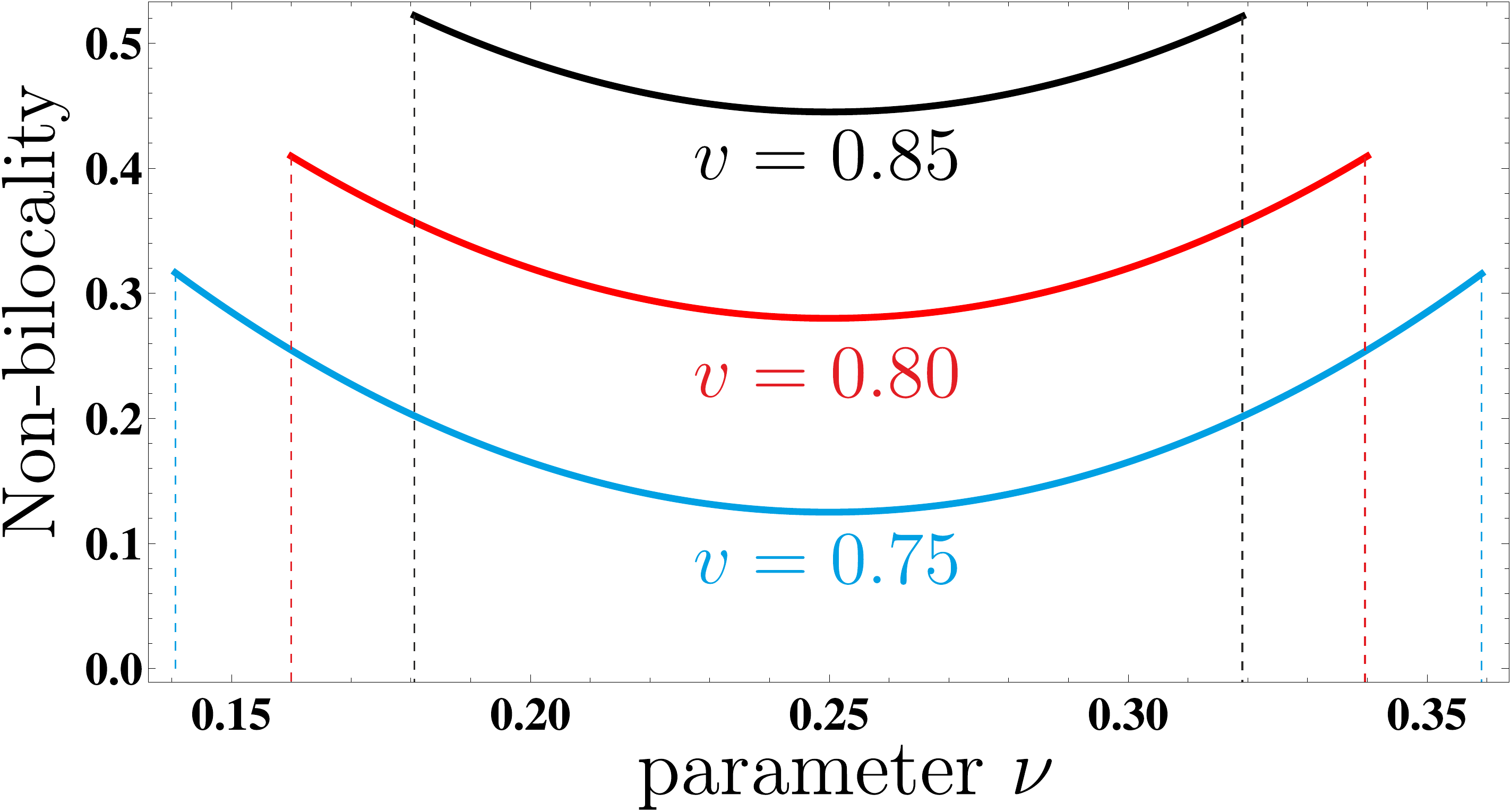}
\caption{$\mathcal{M}_{\text{BL}}(\nu)$ as a function of $\nu$ for three different values of the visibility ($v=0.75$ (blue curve), $v=0.80$ (red curve) and $v=0.85$ (black curve)). The dashed lines correspond to the minimum and maximum values of the parameter $\nu$ that are compatible with the probability distribution. We observe that for the specific distribution considered the minimum of $\mathcal{M}_{\text{BL}}(\nu)$ is achieved for $\nu=1/4$.}
\label{bilocal2}
\end{figure}

To illustrate our framework, and to compare with previous results, we now consider the case where the inputs and ouputs of $A$ and $C$ are all dichotomic ($o_a=o_c=m_x=m_z=2$), and the output of $B$ takes four values ($o_b=4$) that we decompose as $b=(b_0,b_1)$ where $b_0$, $b_1$ are bits. Furthermore, we consider the distribution \cite{Branciard2010,Branciard2012}
\begin{equation}
\label{pv_dist}
p_v\left(  a,b,c | x,z \right)  = v^2 p\left(  a,b,c | x,z \right) + (1-v^2)\frac{1}{16}
\end{equation}
with
\begin{equation}
p\left(  a,b,c | x,z \right)  = \frac{1}{16}\left( 1+ (-1)^{a+c}\frac{(-1)^{b_0}+(-1)^{x+z+b_1}}{2} \right)
\end{equation}
This distribution can be obtained by using shared Werner states with visibility $v$, that is $\varrho=v\ket{\Psi^{-}}\bra{\Psi^{-}}+(1-v)\mathbb{I}/4$, on which Alice and Charlie perform measurements given by $A_0=C_0=\frac{1}{\sqrt{2}}(Z+X)$ and $A_1=C_1=\frac{1}{\sqrt{2}}(Z-X)$, while Bob measures in the Bell basis assigning $b_0b_1=00,01,10,11$ to $\ket{\Phi^{+}}$, $\ket{\Phi^{-}}$, $\ket{\Psi^{+}}$ and $\ket{\Psi^{-}}$. As shown in \cite{Branciard2010,Branciard2012} this distribution is non-bilocal. Taking the marginal of \eqref{eq.bilocLHVq} gives $p(a|x) = \sum_{\bar{\alpha}} \delta_{a,\alpha_x} q_{\bar{\alpha}}$, explicitly for the distribution \eqref{pv_dist}
\begin{equation}
\begin{split}
p(a=0\vert x=0) & = q^{a}_{0,0} +q^{a}_{0,1} = \frac{1}{2} \\
p(a=0\vert x=1) & = q^{a}_{0,0} +q^{a}_{1,0} = \frac{1}{2} \\
p(a=1\vert x=0) & = q^{a}_{1,0} +q^{a}_{1,1} = \frac{1}{2} \\
p(a=1\vert x=1) & = q^{a}_{0,1} +q^{a}_{1,1} = \frac{1}{2} .
\end{split}
\end{equation}
This implies that $q^{a}_{0,0}=q^{a}_{1,1}$ and $q^{a}_{1,0}=q^{a}_{0,1}=1/2-q^{a}_{0,0}$ and thus we have a single free parameter $\nu = q^{a}_{0,0}$. The parameter is further constrained by the full distribution $p(a,b,c\vert x,z)$. To determine its range we run the following two linear programs
\begin{equation}
\begin{split}
\minimize \,\, & \langle \mathbf{c} , \mathbf{q} \rangle 	\\
\st & A \mathbf{q} = \mathbf{p},		\\
& \mathbf{q} \geq \mathbf{0}_n ,
\end{split}
\end{equation}
and
\begin{equation}
\begin{split}
\maximize \,\, & \langle \mathbf{c} , \mathbf{q} \rangle 	\\
\st & A \mathbf{q} = \mathbf{p},		\\
& \mathbf{q} \geq \mathbf{0}_n ,
\end{split}
\end{equation}
where $\langle\mathbf{c},\mathbf{q}\rangle = q^a_{0,0}$. These two linear programs define a range $\nu_{min} \leq \nu \leq \nu_{max}$. In some particular cases $\nu_{max}=\nu_{min}$, in which case the minimisation over $\nu$ is superfluous and the minimum of $\mathcal{M}_{\text{BL}}$ is directly given by a linear program and is thus analytical. However, in general these bounds are different. For the distribution \eqref{pv_dist} with $v=1$, we have $\nu_{max}=\nu_{min}=1/4$, while for $v=0.8$ we have $\nu_{min}=0.16$ and $\nu=0.34$. In general what we observe is that for any $v$, the minimum $\mathcal{M}_{\text{BL}}(\nu)$ occurs at $\nu=1/4$. This is illustrated \figref{bilocal2}.

In \figref{bilocal1} we show how the minimum of $\mathcal{M}_{\text{BL}}$ depends on the visibility. We also show the value of the bilocality quantity $\mathcal{B}=\sqrt{|I|}+\sqrt{|J|}$ given in \cite{Branciard2010,Branciard2012}, where
\begin{equation}
\begin{split}
I & =\frac{1}{4} \displaystyle \sum_{x,z=0}^1 \langle A_{x}B^{0}C_{z} \rangle, \\
J & =\frac{1}{4} \displaystyle \sum_{x,z=0}^1 (-1)^{x+z} \langle A_{x}B^{1}C_{z} \rangle ,
\end{split}
\end{equation}
and
\begin{equation}
\langle A_{x}B^{y}C_{z} \rangle= \displaystyle \sum_{a,b_0,b_1,c} (-1)^{a+b_{y}+c} p(a,b_{0},b_{1},c|x,z).
\end{equation}
In \cite{Branciard2010,Branciard2012} it was shown that bilocality implies $\mathcal{B} \leq 1$. For the distribution \eqref{pv_dist} on the other hand, $I=J=\frac{1}{2} v^2 $ and therefore $\mathcal{B}=\sqrt{2}v$. From the numerical results in \figref{bilocal1} one can easily fit the data and find $\min \mathcal{M}_{\text{BL}}=\mathcal{B}^2-1$. Thus the violation of the bilocality corresponds exactly to how much bilocality must be relaxed to reproduce the observed distribution.

\begin{figure}[!t]
\includegraphics[width=1.0\columnwidth]{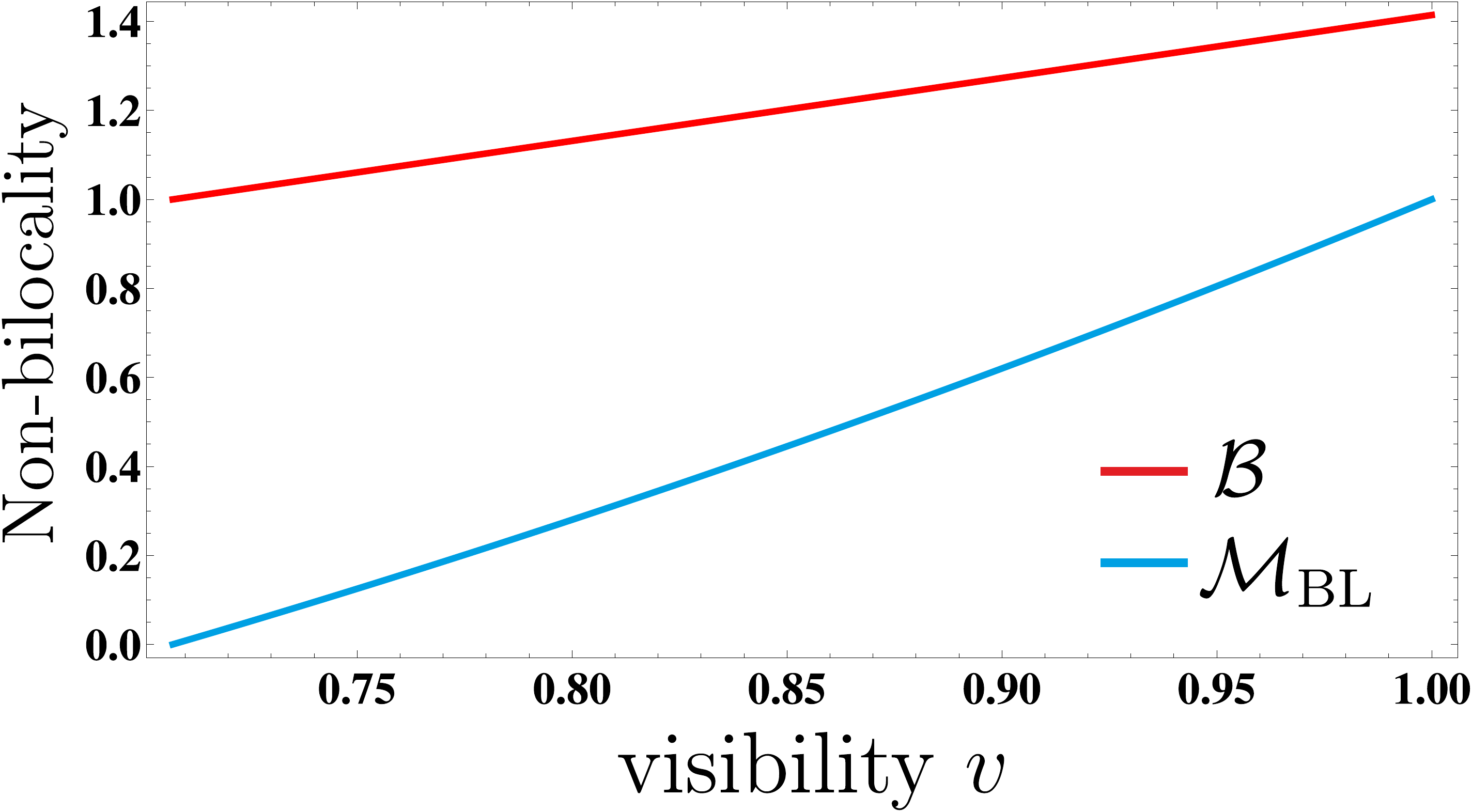}
\caption{Minimum of the non-bilocality measure $\mathcal{M}_{\text{BL}}$ vs.~visibility $v$ (blue). We also show the bilocality quantity $\mathcal{B}$ (red). Our measure can be understood as the amount of correlation between the sources required to simulate the observed correlations.}
\label{bilocal1}
\end{figure}

\subsection{Bilocality with ternary inputs}

To sketch how the linear framework could be used in more general bilocality scenarios we consider the case where Alice and Charlie can perform three different measurements. We again consider the case of trivial marginals $p(a\vert x)=1/2$. This imposes the following constraints on $q^a_{\alpha_0,\alpha_1,\alpha_2}$
\begin{equation}
\begin{split}
p(a=0\vert x=0) & = q^{a}_{0,0,0} +q^{a}_{0,0,1} +q^{a}_{0,1,0} +q^{a}_{0,1,1} = \frac{1}{2} \\
p(a=0\vert x=1) & = q^{a}_{0,0,0} +q^{a}_{0,0,1} +q^{a}_{1,0,0} +q^{a}_{1,0,1} = \frac{1}{2} \\
p(a=0\vert x=2) & = q^{a}_{0,0,0} +q^{a}_{0,1,0} +q^{a}_{1,0,0} +q^{a}_{1,1,0} = \frac{1}{2} \\
p(a=1\vert x=0) & = q^{a}_{1,0,0} +q^{a}_{1,0,1} +q^{a}_{1,1,0} +q^{a}_{1,1,1} = \frac{1}{2} \\
p(a=1\vert x=1) & = q^{a}_{0,1,0} +q^{a}_{0,1,1} +q^{a}_{1,1,0} +q^{a}_{1,1,1} = \frac{1}{2} \\
p(a=1\vert x=2) & = q^{a}_{0,0,1} +q^{a}_{0,1,1} +q^{a}_{1,0,1} +q^{a}_{1,1,1} = \frac{1}{2} ,
\end{split}
\end{equation}
which implies that
\begin{equation}
\begin{split}
q^{a}_{011} &= \frac{1}{2} - q^{a}_{000} - q^{a}_{001} - q^{a}_{010}\\
q^{a}_{101} &= \frac{1}{2} - q^{a}_{000} - q^{a}_{001} - q^{a}_{100}\\
q^{a}_{110} &= \frac{1}{2} - q^{a}_{000} - q^{a}_{010} - q^{a}_{100}\\
q^{a}_{111} &=-\frac{1}{2} + 2 q^{a}_{000} + q^{a}_{001} + q^{a}_{010} + q^{a}_{100}.
\end{split}
\end{equation}
This means that we now have four free parameters $\nu = (q^{a}_{000},q^{a}_{001}, q^{a}_{010}, q^{a}_{100})$ . To linearize $\mathcal{M}_{\text{BL}}$ in this case, we need to optimize over these four variables.

In practice, given a certain distribution $p(a,b,c\vert x,z)$, we first fix a certain value for $q^{a}_{000}=c_0$ in the range  $q^{min}_{000} \leq q^{a}_{000} \leq q^{max}_{000}$. We then solve a linear program to find the bounds for the next free parameter $q^{min}_{001} \leq q^{a}_{001} \leq q^{max}_{001} $ but now imposing also the constraint that $q^{a}_{000}=c_0$. Fixed $q^{a}_{000}=c_0$ and $q^{a}_{001}=c_1$ we look for the bounds of the next free parameter $q^{min}_{010} \leq q^{a}_{010} \leq q^{max}_{010}$. We now run the linear program for the remaining free parameter in the range $q^{min}_{100} \leq q^{a}_{100} \leq q^{max}_{100} $ determined by the probability distribution and the constraints $q^{a}_{000}=c_0$, $q^{a}_{001}=c_1$, $q^{a}_{010}=c_2$.

For a sufficiently good discretization of these continuous free parameters, we can be quite confident about the non-bilocality of the distribution if we find no values for which $\mathcal{M}_{\text{BL}}\neq 0$. On the other hand, if we find any values for the free parameters such that $\mathcal{M}_{\text{BL}} = 0$, then we can immediately conclude that the distribution is bilocal. To illustrate this we have tested the distribution obtained with two maximally entangled states $\ket{\Psi^-}$ when Alice and Charlie measure the three observables $X,Y,Z$ while Bob measures in the Bell basis. It is possible to show that this distribution is bilocal by setting  $q^{a}_{000}=0$ and $q^{a}_{001}=q^{a}_{010}=q^{a}_{100}=1/4$.

\end{document}